\algnewcommand\Input{\item[\textbf{Input:}]}%
\algnewcommand\Output{\item[\textbf{Output:}]}%
\algnewcommand\Init{\item[\textbf{Init:}]}%
\newcolumntype{L}{>{$}l<{$}} 
\newcolumntype{C}{>{$}c<{$}} 
\newcolumntype{R}{>{$}r<{$}} 
\theoremstyle{plain}
\newtheorem{myTheorem}{Theorem}
\newtheorem{myLemma}{Lemma}
\theoremstyle{definition}
\newtheorem{myRemark}{Remark}
\newcommand{\setOfReals}{\mathbb{R}}
\newcommand{\setOfNaturals}{\mathbb{N}}
\newcommand{\setOfNonnegativeIntegers}{\mathbb{N}_0}
\newcommand{\setOfPositiveReals}{\setOfReals_{+}}
\newcommand{\setN}[1]{ [ #1]}
\newcommand{\Bin}[2]{\text{\underline{Binomial}} (#1, #2 )}
\newcommand{\PowerDistribution}[1]{  \text{\underline{Pow}} (#1 ) }
\DeclareMathOperator*{\argmin}{arg\,min}
\newcommand{\indicator}[1]{\mathds{1}(#1)}
\newcommand{\myExp}[1]{\exp \bigl( #1 \bigr)  }
\newcommand{\E}{\mathrm{E}}
\newcommand{\Eof}[1]{\E[#1]}
\newcommand{\prob}{\mathrm{P}}
\newcommand{\probOf}[1]{\prob(#1)}
\newcommand{\eqpunkt}{.}
\newcommand{\eqkomma}{,}
\newcommand{\defeq}{\coloneqq}
\newcommand{\disteq}{\, =_{\mathcal{D}} \, }
\newcommand{\ie}{\textit{i.e.}}
\newcommand{\eg}{\textit{e.g.}}
\newcommand{\proofLocation}[2]{\ifthenelse{\boolean{longVersion}}{\text{#1}}{\text{#2}}}
\newcommand{\tdWasiur}[2][]{\ifthenelse{\boolean{draftversion}}{\todo[inline, color=blue!20, caption={2do}, #1]{\begin{minipage}{\textwidth-4pt}\emph{Remark Wasiur:}\\#2\end{minipage}}}{}}
\newcommand{\tdAmr}[2][]{\ifthenelse{\boolean{draftversion}}{\todo[inline, color=orange!20, caption={2do}, #1]{\begin{minipage}{\textwidth-4pt}\emph{ToDo for Amr:}\\#2\end{minipage}}}{}}
\newcommand{\remarkAlex}[2][]{\ifthenelse{\boolean{draftversion}}{\todo[inline, color=red!20, caption={2do}, #1]{\begin{minipage}{\textwidth-4pt}\emph{Remark Alex:}\\#2\end{minipage}}}{}}
\newcommand{\tdGeneral}[2][]{\ifthenelse{\boolean{draftversion}}{\todo[inline, color=green!20, caption={2do}, #1]{\begin{minipage}{\textwidth-4pt}\emph{ToDO:}\\#2\end{minipage}}}{}}
\tikzstyle{server}=[circle, line width=0.5pt, rounded corners=0.1mm, draw=black!100, fill=tud3a!100]
\tikzstyle{vertex}=[circle, line width=0.5pt, draw=black!100, fill=tud0a!100]
\tikzstyle{dispatcher} =[and gate US, line width=0.5pt, draw=black!100, fill=tud1a!100]
\tikzstyle{dotbox} = [draw=white, fill=white, rectangle,  inner sep=10pt, inner ysep=20pt]
\tikzset{three_sided/.style={
		draw=none,rectangle, 
		append after command={
			[shorten <= -0.5\pgflinewidth]
			([shift={(-1.5\pgflinewidth,-0.5\pgflinewidth)}]\tikzlastnode.north west)
			edge([shift={( 0.5\pgflinewidth,-0.5\pgflinewidth)}]\tikzlastnode.north east)
			([shift={( 0.5\pgflinewidth,-0.5\pgflinewidth)}]\tikzlastnode.north east)
			edge([shift={( 0.5\pgflinewidth,+0.5\pgflinewidth)}]\tikzlastnode.south east)
			([shift={( 0.5\pgflinewidth,+0.5\pgflinewidth)}]\tikzlastnode.south east)
			edge([shift={(-1.0\pgflinewidth,+0.5\pgflinewidth)}]\tikzlastnode.south west)
		}
	}
}
\title{Optimizing Stochastic Scheduling in Fork-Join Queueing Models: Bounds and Applications\ifthenelse{\boolean{longVersion}}{\\(Extended Version)}{}
}
\author{\IEEEauthorblockN{Wasiur R. KhudaBukhsh\IEEEauthorrefmark{1}, Amr Rizk\IEEEauthorrefmark{2}, Alexander Fr\"ommgen\IEEEauthorrefmark{2},  and Heinz Koeppl\IEEEauthorrefmark{1}}
	\IEEEauthorblockA{\IEEEauthorrefmark{1}Bioinspired Communication Systems Lab (BCS), E-Mail: \{wasiur.khudabukhsh \textbar~heinz.koeppl\}@bcs.tu-darmstadt.de,}
	\IEEEauthorblockA{\IEEEauthorrefmark{2}Multimedia Communications Lab (KOM), E-Mail: \{amr.rizk  \textbar~alexander.froemmgen\}@kom.tu-darmstadt.de, \\
		Technische Universitaet Darmstadt, Germany}
}
\begin{document}
\maketitle

\begin{abstract}

Fork-Join (FJ) queueing models 
capture the dynamics of system parallelization under synchronization constraints, for example, for applications such as MapReduce, multipath transmission and RAID systems.
Arriving jobs are first split into tasks and mapped to 
servers for execution, such that a job can only leave the system when all of its tasks are executed. 

In this paper, we provide computable stochastic bounds for the waiting and response time distributions for heterogeneous FJ systems under general parallelization benefit.
Our main contribution is a generalized mathematical framework for probabilistic server scheduling
strategies that are essentially characterized by a probability distribution over the number of utilized servers, and the optimization thereof. We highlight the trade-off  between the scaling benefit due to parallelization and the FJ inherent synchronization penalty. Further, we provide optimal scheduling strategies for arbitrary scaling regimes that map to different levels of parallelization benefit.
One notable insight obtained from our results is that different applications with varying parallelization benefits result in different optimal strategies. Finally, we complement our analytical results by applying
them
to various applications
showing the optimality of the proposed scheduling strategies.

\end{abstract}   \IEEEpeerreviewmaketitle


 \section{Introduction}
 \label{sec:introduction}
Fork-Join (FJ) queueing models naturally
capture the dynamics of system parallelization under synchronization constraints. They have 
seen a rise of interest as a modeling tool in the wake of massive improvement of the infrastructure for cloud computing and large-scale data processing.
The emergence of parallel data processing frameworks such as MapReduce \cite{dean2008mapreduce,polato2014comprehensive} and its implementation Hadoop \cite{Hashem2016} has significantly contributed to the modern IT infrastructure.
\remarkAlex{old version: as a modeling tool -> as modeling tool ... saved one line}

Fig.~\ref{fig:mapReduce} presents a MapReduce abstraction that closely resembles an FJ system.
Arriving jobs are first split into tasks each of which is then mapped exactly to one work-conserving server that executes the \emph{map} operation.
An optional \emph{combine} operation compresses the intermediate result to reduce the amount of data that is 
transferred through the network.
The compression efficiency depends on the application and, in particular, on the input the data size.
A job finally leaves the system when all of its tasks are executed.

In order to design better parallelized systems we require tractable models that connect system dynamics to corresponding key performance metrics. However, until today an exact analysis of FJ queueing systems in a general setup remains elusive \cite{baccelli1989fork,boxma1994queueing}.
It is particularly hard to find closed form expressions for the steady-state distributions of key quantities in FJ systems such as the waiting and response times.
In this paper, we contribute computable bounds for heterogeneous FJ systems under a fairly general setup.  Our main contribution is a generalized mathematical framework that allows the optimization of probabilistic server scheduling strategies that are shown to save server costs.

\begin{figure}
	\centering
	\includegraphics[width=1\linewidth]{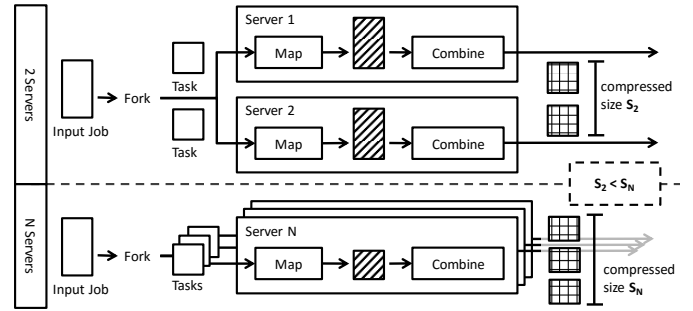}
	\caption{MapReduce as Fork-Join system: The output size of the \emph{combine}-phase may not scale linearly with the input size.}
	\label{fig:mapReduce}
	\vspace{-3pt}
\end{figure}

In this work, we model one of the main advantages of parallel systems, namely, the application specific parallelization benefit. To this end, we use the notion of service time scaling at each server of the FJ system. Since a job can only leave the system when all of its tasks are executed, we observe a naturally arising synchronization penalty in FJ systems.
In this paper, we analytically highlight this trade-off for arbitrary parallelization benefit regimes. We also show the impact of heterogenous servers on this trade-off.

Since in large pools of cloud resources, or, in general, in many parallelized systems, jobs are not mapped to \emph{all} available resources, and given the performance trade-off mentioned above, it is important to select the  number of utilized servers from a given pool of available ones in an informed way.
In the context of FJ systems, we define a scheduling strategy to be a probabilistic strategy of server selection. Clearly, a deterministic strategy is hence a degenerate case.
In this work, we formalize scheduling strategies in FJ systems, derive corresponding stochastic bounds on the waiting and response times, and minimize them to provide optimal strategies under arbitrary application specific parallelization benefits.

Our key contributions in this paper include: \textbf{(1)} Computable stochastic bounds for the steady-state distributions of the waiting and response times for a broad class of heterogeneous FJ systems for various scaling regimes.\footnote{We will use the terms \emph{scaling} and \emph{parallelization benefit} interchangeably.}  \textbf{(2)} A generalized mathematical framework for scheduling strategies that highlights the trade-off between parallelization benefit and the synchronization penalty, and enables finding optimal scheduling strategies for arbitrary scaling regimes. \textbf{(3)} Application of our model to different scenarios showing their efficiency.
\remarkAlex{It was: We show the strength of the model by applying our results to different application scenarios and thereby showing their efficiency.}

We organize the paper with a view to developing the concepts gradually and naturally, and to conveying the intuitions. Starting from the simplest case, we build up to the most general one.   The remainder of the paper is structured as follows: Sect.~\ref{sec:HetFJSystems} lays the mathematical foundation of our model of heterogeneous FJ systems. In Sect.~\ref{sec:job_assigment}, we introduce scheduling in FJ systems. Our main discussion on application specific scaling and scheduling under arbitrary scaling regimes is given in Sect.~\ref{sec:application_scaling}. In Sect.~\ref{sec:eval}, we consider concrete applications of our model and show corresponding findings. Finally, we discuss related work in Sect.~\ref{sec:related-work} and then conclude the paper with a short discussion in Sect.~\ref{sec:discussion}.

\section{Heterogeneous Fork-Join queueing systems}
\label{sec:HetFJSystems}

This section introduces FJ systems and provides stochastic bounds on the steady state waiting and response time distributions for a general heterogeneous setting. We denote the set of natural numbers by $\setOfNaturals$. Let $\setOfNonnegativeIntegers \defeq \setOfNaturals \cup \{0\}$. For an event $A$, $\indicator{A}$ is its indicator function.

\subsection{System description}
Consider a single stage FJ queueing system with $N$ parallel servers as depicted in Fig.~\ref{fig:mapReduce}. The servers are indexed on the set $\setN{N} \defeq \{1,2,\ldots, N \}$. Jobs arrive at the input station according to some point process with inter-arrival time $T_i$
between the $i$-th and $(i+1)$-th job, $i \in \setOfNaturals$. In the basic model a job is  split into $N$ tasks each of which is  assigned to exactly one server.
The service time for the task of job $i$ at the $n$-th server is denoted by the random variable $X_{n,i}$.
We shall assume independence of the families $\{X_{n,i}\}$ and $\{T_i\}$ throughout the course of this work. For lack of space, we only consider work-conserving servers in this paper.
We assume that the families $\{X_{n,i}\}$ and $\{T_i\}$ admit finite moment generating function (MGF) and Laplace transform,  defined as
$	\alpha_n(\theta)  \defeq  \Eof{e^{\theta X_{n,1}}}   \eqkomma
\beta(\theta)  \defeq  \Eof{e^{ - \theta T_{1}}} $,
respectively, for some $\theta >0 $
and for all $n \in \setN{N}$. 
We also assume the job arrival process is a renewal process.


\begin{figure*}[t!]
	\centering
	\begin{subfigure}[b]{0.32\textwidth}
		\centering
		\includegraphics[width=\textwidth]{/Section1Figures/{ExpUniform_boxplot}.pdf}
	\end{subfigure}
	\begin{subfigure}[b]{0.32\textwidth}
		\centering
		\includegraphics[width=\textwidth]{/Section1Figures/{ExpUniformSwitching_MGF_Laplace}.pdf}
	\end{subfigure}
	\begin{subfigure}[b]{0.32\textwidth}
		\centering
		\includegraphics[width=\textwidth]{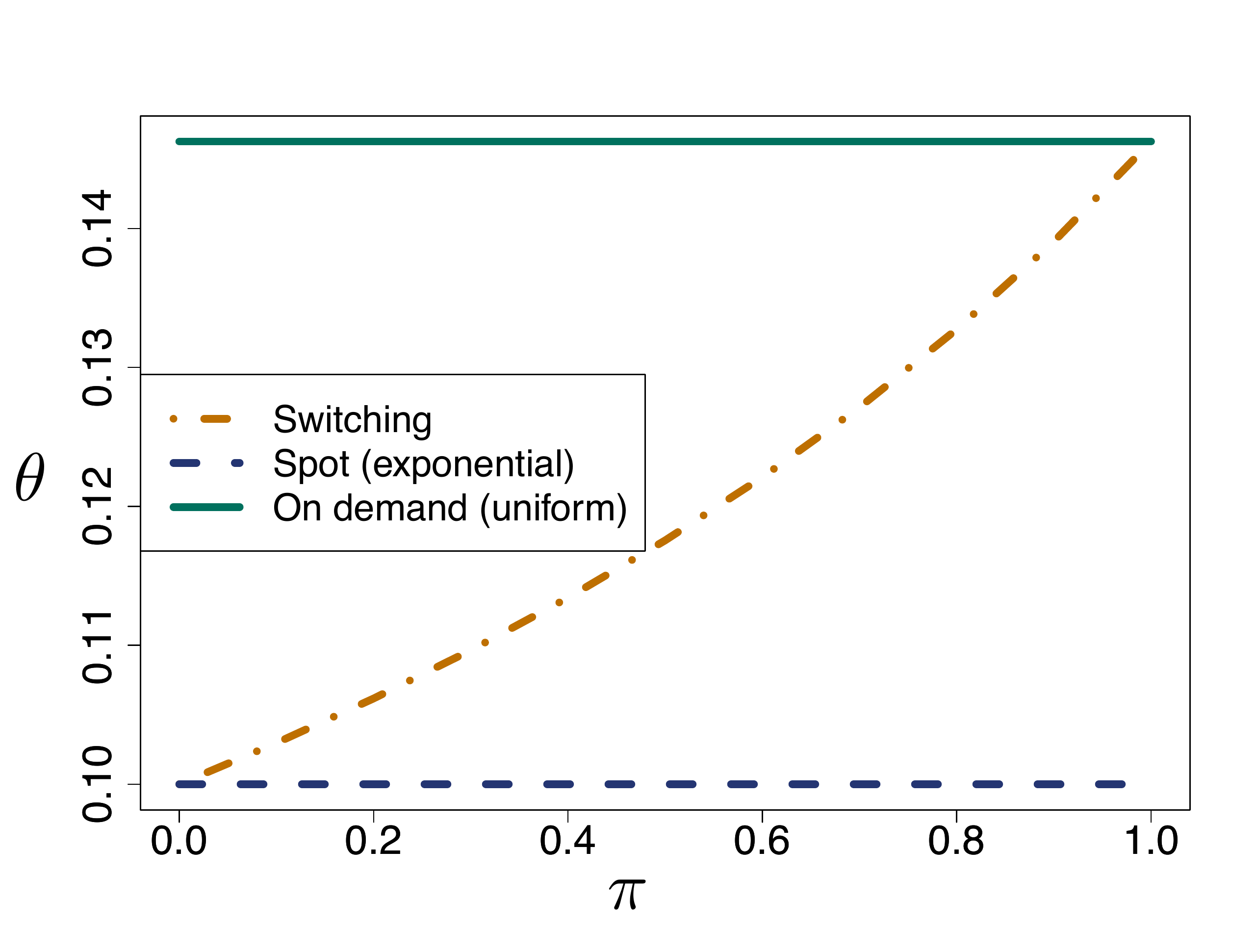}
	\end{subfigure}
	\vspace{-2mm}
	\caption{Example of a heterogeneous FJ system. (Left) Waiting time performance in a MapReduce cloud scenario with $N=2$ 
		partially volatile servers. One server is  on an  average faster representing a revocable checkpointed spot server with an exponential tail of service time. The second server provides on average slower service with uniformly distributed service times representing an on-demand server with stronger guarantees. The bound is calculated using Thm.~\ref{theorem:RenewalsNon-blocking}. CCDF denotes the complementary cumulative distribution function.  (Middle) The FJ system is constrained by the (on an  average) faster spot server due to its larger higher moments. This is apparent in the MGF condition $\alpha_n(x) \beta(x) =1$. Observe that the constraining decay rate is given by $ \tilde{\theta} \defeq \min_{n \in \setN{N} } \theta_n$. (Right) A system that switches between spot and on-demand servers with $\pi$ being the fraction of time where on-demand servers are used. Observe the improvement in the decay rate $\theta$ with increasing $\pi$. Simulation parameters: spot exponential service rate $\mu=1$, inter-arrival exponential rate $\lambda=0.9$ and uniform service time over $[0.001,2.009]$. }
	\vspace{-3mm}
	\label{fig:FJ_intro_example_spot}
\end{figure*}
\remarkAlex{changed Fork-Join system to FJ, saved a line in both columns}


\subsection{Waiting and response times for heterogeneous FJ Systems}
\label{sec:renewal_nonblocking}

In an FJ queueing system the waiting time $W_j$ is defined as $0$ for $j=1$ and $	\max\{ 0, \sup_{k \in \setN{j-1}} \{  \sup_{n \in \setN{N}} \{ \sum_{i=1}^{k} X_{n,j-i}  -\sum_{i=1}^{k} T_{j-i}  \} \}   \} \eqkomma $ for $j>1$ \cite{Rizk2015Sigmetrics}. Intuitively a job is considered to be waiting until its last task starts being serviced. The waiting time for the first job is assumed to be zero. 
Similarly the response time $R_j$ of job $j$  is defined as $ \max_{n \in \setN{N}} X_{n,1} $ for $j=1$ and $\sup_{k \in \setN{j-1}\cup \{0\} } \{  \sup_{n \in \setN{N}} \{ \sum_{i=0}^{k} X_{n,j-i}  -\sum_{i=1}^{k} T_{j-i}  \} \}  $ for $j>1$.
%
%
%
In order to get steady state representations of the above two random quantities, we require the stability condition $ \max_{n \in \setN{N}} \Eof{X_{n,1}} < \Eof{T_1} $.
%
%
Then, by  stationarity of the system, we have the following steady state representations of the waiting time $W$ and the response time $R$:
%
\begin{equation}
\begin{aligned}
W  \disteq  \sup_{k \in \setOfNonnegativeIntegers} \{  \sup_{n \in \setN{N}}  \{ \sum_{i=1}^{k} X_{n,i}  -\sum_{i=1}^{k} T_{i}   \} \}  \eqkomma \\
R \disteq  \sup_{k \in \setOfNonnegativeIntegers} \{  \sup_{n \in \setN{N}}  \{ \sum_{i=0}^{k} X_{n,i}  -\sum_{i=1}^{k} T_{i}   \} \} \eqkomma
\end{aligned}\label{eq:steady_state_renewal_nonblocking}
\end{equation}
%
%
where $\disteq$ denotes equality in distribution.
Now, we provide our first result giving stochastic bounds on the tail probabilities of $W$ and $R$ upon which we build the rest of the  paper.
\begin{myTheorem}
Consider an FJ system with $N$ parallel work-conserving servers fed by renewal job arrivals with inter-arrival times $T_i$, for $i \in \setOfNaturals$. Assuming iid service times $X_{n,i}$ and pairwise independence of the servers,
the steady state waiting and response time distributions are bounded by
 \begin{align*}
  \probOf{ W \geq \sigma }  \leq {}& \myExp{ - \tilde{\theta}  \sigma  }  \sum_{n \in \setN{N} } \myExp{ -  ( \theta_n  -  \tilde{\theta}   )\sigma }   
   \eqkomma \\
  \probOf{ R \geq \sigma }  \leq {} &    \myExp{ - \tilde{\theta}  \sigma  }  \sum_{n \in \setN{N} }  \alpha_n( \theta_n)  \myExp{ -  ( \theta_n  -  \tilde{\theta}   )\sigma }   
  \eqkomma
 \end{align*}
 where 
 $\theta_n$ is the positive solution of $ \alpha_n(x) \beta(x) =1$
 for $n \in \setN{N}$ and $\tilde{\theta} \defeq \min_{n \in \setN{N} } \theta_n$.
 \label[theorem]{theorem:RenewalsNon-blocking}
\end{myTheorem}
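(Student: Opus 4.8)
The plan is to control each supremum over $k$ by a union bound / Boolean-inequality argument combined with a Chernoff-type bound, and then to combine the $N$ servers by another union bound. I would start from the steady-state representation \eqref{eq:steady_state_renewal_nonblocking}. For the waiting time, note that
\[
\probOf{W \geq \sigma}
= \probOf{\;\exists\, k\in\setOfNonnegativeIntegers,\ n\in\setN{N}:\ \textstyle\sum_{i=1}^{k}\bigl(X_{n,i}-T_i\bigr)\geq \sigma}
\leq \sum_{n\in\setN{N}}\ \sum_{k\in\setOfNonnegativeIntegers}\probOf{\textstyle\sum_{i=1}^{k}\bigl(X_{n,i}-T_i\bigr)\geq \sigma}.
\]
For fixed $n$ and $k$, apply the Chernoff bound with parameter $\theta_n>0$: by independence of the $X_{n,i}$'s and the $T_i$'s and the renewal assumption,
\[
\probOf{\textstyle\sum_{i=1}^{k}(X_{n,i}-T_i)\geq \sigma}
\leq e^{-\theta_n\sigma}\,\bigl(\alpha_n(\theta_n)\,\beta(\theta_n)\bigr)^{k}
= e^{-\theta_n\sigma},
\]
where the last equality is exactly the defining equation $\alpha_n(\theta_n)\beta(\theta_n)=1$ of $\theta_n$. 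This is the key point: $\theta_n$ is chosen precisely so that the geometric factor $(\alpha_n(\theta_n)\beta(\theta_n))^k$ equals $1$, so the sum over $k$ does not converge term-by-term in the naive way — instead one sums the bound as a constant. Here I would be a little more careful: since each term is bounded by $e^{-\theta_n\sigma}$ uniformly in $k$, I instead pick a slightly smaller exponent, or more cleanly, I bound the whole supremum at once. The standard trick (as in \cite{Rizk2015Sigmetrics}) is:
\[
\probOf{\sup_{k}\textstyle\sum_{i=1}^{k}(X_{n,i}-T_i)\geq \sigma}
\leq e^{-\theta_n\sigma}\,\E\!\left[\sup_k e^{\theta_n\sum_{i=1}^k(X_{n,i}-T_i)}\right]
\leq e^{-\theta_n\sigma}\,\sum_{k\geq 0}\bigl(\alpha_n(\theta_n)\beta(\theta_n)\bigr)^k
\]
and then observe that the partial-product / submartingale structure lets one replace the divergent sum by the tight constant $1$ via a limiting argument (or by using $\theta_n-\epsilon$ and letting $\epsilon\downarrow 0$); this yields $\probOf{\sup_k \sum_{i=1}^k(X_{n,i}-T_i)\geq\sigma}\leq e^{-\theta_n\sigma}$. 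Summing over $n\in\setN{N}$ gives
\[
\probOf{W\geq\sigma}\leq \sum_{n\in\setN{N}} e^{-\theta_n\sigma}
= e^{-\tilde\theta\sigma}\sum_{n\in\setN{N}} e^{-(\theta_n-\tilde\theta)\sigma},
\]
which is the claimed waiting-time bound after factoring out the dominant decay rate $\tilde\theta=\min_n\theta_n$.

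For the response time the argument is identical except that the inner sum now starts at $i=0$, i.e. it is $X_{n,0}+\sum_{i=1}^{k}(X_{n,i}-T_i)$. The extra leading term $X_{n,0}$ is independent of the rest, so the Chernoff step produces an additional multiplicative factor $\Eof{e^{\theta_n X_{n,0}}}=\alpha_n(\theta_n)$ in front, giving $\probOf{\sup_k[\,X_{n,0}+\sum_{i=1}^k(X_{n,i}-T_i)\,]\geq\sigma}\leq \alpha_n(\theta_n)\,e^{-\theta_n\sigma}$, and summing over $n$ yields the stated response-time bound.

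The main obstacle is the non-convergence of $\sum_k(\alpha_n(\theta_n)\beta(\theta_n))^k$ at the critical exponent $\theta_n$: a naive union bound over $k$ diverges. The fix is the supermartingale/Doob-inequality route — the sequence $M_k\defeq e^{\theta_n\sum_{i=1}^k(X_{n,i}-T_i)}$ is a nonnegative martingale under $\alpha_n(\theta_n)\beta(\theta_n)=1$, so $\probOf{\sup_{k\leq m}M_k\geq e^{\theta_n\sigma}}\leq e^{-\theta_n\sigma}\Eof{M_0}=e^{-\theta_n\sigma}$ for every finite horizon $m$, and monotone convergence in $m$ removes the truncation. I would also need the stability condition $\max_n\Eof{X_{n,1}}<\Eof{T_1}$ to guarantee $\theta_n>0$ exists and that the steady-state suprema in \eqref{eq:steady_state_renewal_nonblocking} are almost surely finite; both are assumed in the statement. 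Everything else — independence across servers feeding the per-server union bound, and the algebra of pulling out $e^{-\tilde\theta\sigma}$ — is routine.
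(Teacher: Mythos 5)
Your proposal is correct and, after discarding the initial (divergent) naive union bound over $k$, lands on essentially the same argument as the paper: for each server $n$ the process $e^{\theta_n\sum_{i=1}^k(X_{n,i}-T_i)}$ is a nonnegative martingale because $\alpha_n(\theta_n)\beta(\theta_n)=1$, Doob's maximal inequality gives the per-server tail bound $e^{-\theta_n\sigma}$ (with the extra factor $\alpha_n(\theta_n)$ for the response time coming from the independent leading term $X_{n,0}$), and Boole's inequality over the $N$ servers plus factoring out $e^{-\tilde\theta\sigma}$ finishes the proof. The only cosmetic difference is that the paper presents the martingale--Doob route directly rather than as a repair of the term-by-term Chernoff bound.
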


The  key steps involved in the proof of the above theorem are: 1) constructing separate martingales for each of the servers; 
and 2) applying Doob's sub- and supermartingale inequalities (see 
\cite{ash2014real}) to arrive at the  bounds. 
The detailed proof is provided in \ifthenelse{\boolean{longVersion}}{Sect. \ref{sec:Appendix}}{\cite{KhudaBukhsh2016techreport}}. 
%
Note that the stability condition 
guarantees the existence of $\theta_n >0$ such that $ \alpha_n( \theta_n) \beta(\theta_n) =1 $ for all $n \in \setN{N}$ (see \cite{poloczek2014scheduling, boxma1994queueing}). Hence, $\tilde{\theta} > 0$ is well defined.

\noindent\textbf{Example: Hedging using revocable cloud resources.} We consider a mixed cloud service consisting of both highly guaranteed and revocable resources. This service could be supplied by  infrastructure providers such as Amazon EC2~\cite{EC2}, or by a virtual provider on top using, e.g., on-demand or revocable spot market machines \cite{SubramanyaGSIS15}.

Consider an application of parallel computation under synchronization such as MapReduce \cite{EC2} or Spark \cite{Spark} requiring $N$ machines. In this example, we consider the case of exchanging on-demand machines with spot machines to save costs. In general, for a fixed budget the user obtains \emph{faster} spot machines in comparison to on-demand machines. The price difference arises naturally since spot machines are  at risk of revocation \cite{SubramanyaGSIS15}. We abstract the characteristics of these two classes of machines (\emph{on-demand} and \emph{spot}) through  different job service time distributions. Through revocation and application checkpointing procedures \cite{SubramanyaGSIS15} that are associated with spot machines, we generally model the tail of the corresponding job service time distributions to decay slower than in the case of on-demand machines.
For illustration we assume that the tail of the job service times decays exponentially in case of spot machines while in the case of on-demand machines we model the service times by a uniform distribution. Note that the following argument only requires that the tail of the service times decays slower for spot machines.

Fig.~\ref{fig:FJ_intro_example_spot} (left) shows the waiting time distribution in the case of exchanging an on-demand machine by an - on an average faster - spot machine. At first sight this seems to be a good idea, however, looking at Fig.~\ref{fig:FJ_intro_example_spot} (middle) we clearly see that the system is constrained by the spot machine which has lower average service time, however, a thicker tail. 
The figure on the right shows the utility of trading an on-demand machine with a spot one. While a greater usage of the on-demand machine incurs greater cost, it also increases the decay rate of the waiting and response times, $\theta$ which in turn leads to monetary saving due to faster job execution times.
\remarkAlex{Added left and middle after the figures,in particular, as we highlight the difference between these two figures}
\section{Scheduling tasks in heterogeneous FJ systems}
\label{sec:job_assigment}
%
%
In this section, we study basic scheduling mechanisms that decide on the number of servers to be used from a pool of available servers\footnote{Note that our notion of scheduling  differs from traditional scheduling algorithms such as the Shortest-Remaining-Processing-Time-first (SRPT).}.  Since in large pools of cloud resources (in general for parallelized systems) an arriving job is not scheduled on \emph{all }available resources, we
consider for  each server 
if it is selected to execute a task of an arriving job or not. 
Specifically, when a job arrives we consider that each server $n$ is selected with a probability $\pi_n$. This server selection probability $\pi_n$ can be used to model different aspects of parallelized systems, such  as the server failure rate  in  cloud computing facilities, a quality of service differentiation parameter for different applications, and a tuning parameter to control the degree of replication. Hence, different $\pi_n$ may exist for different classes of users.
Mathematically, the revised task service times $\tilde{X}_{n,i}$ are defined as $X_{n,i}$ with probability $\pi_n$ and $0$ with probability $1-\pi_n$.
The MGF of $ \tilde{X}_{n,i} $ is given by $\alpha_n^{*}(\theta) = (1- \pi_n) + \pi_n \alpha_n(\theta)  $.
The stability condition $\max_{n \in \setN{N}} \Eof{X_{n,i}} < \Eof{T_1}$
ensures the existence of the decay rate $\theta_n >0$ from Thm. \ref{theorem:RenewalsNon-blocking} for each $n \in \setN{N}$ such that $ \alpha_n^{*}(\theta_n) \beta(\theta_n) =1$. Define $\tilde{\theta} \defeq \min_{n \in \setN{N} } \theta_n >0 \eqpunkt$
We retain the same mathematical setup as before except for $X$ being replaced by $\tilde{X}$. 


\begin{myTheorem}
Consider an FJ system with $N$ parallel work-conserving servers fed by renewal job arrivals with inter-arrival times $T_i$, for $i \in \setOfNaturals$. The probability that the $n$-th server is selected  at the arrival of a job is $\pi_n$. Assuming iid service times $X_{n,i}$ and pairwise independence of the servers,
the steady state waiting and response time distributions are bounded by
\vspace{-1mm}
	\begin{align*}
	\probOf{ W \geq \sigma } \leq{} &  \myExp{ - \tilde{\theta}  \sigma  }  \sum_{n \in \setN{N} } \myExp{ -  ( \theta_n  -  \tilde{\theta}   )\sigma }  
	\eqkomma \\
	\probOf{ R \geq \sigma } \leq {} &   \myExp{ - \tilde{\theta}  \sigma  }  \sum_{n \in \setN{N} }  \alpha_n( \theta_n)  \myExp{ -  ( \theta_n  -  \tilde{\theta}   )\sigma }   
	 \eqkomma
	\end{align*}
	
	where 
	$\theta_n$ is the positive solution of $	\alpha^{*}_n(x) \beta(x) =1 \eqkomma$
	for $n \in \setN{N}$ and $\tilde{\theta} \defeq \min_{n \in \setN{N} } \theta_n$.
	\label[theorem]{theorem:PRMappingRenewalsNon-blocking}
\end{myTheorem}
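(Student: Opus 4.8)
The plan is to reduce Theorem~\ref{theorem:PRMappingRenewalsNon-blocking} to Theorem~\ref{theorem:RenewalsNon-blocking} by observing that the scheduling construction merely replaces the original service times $X_{n,i}$ by the thinned service times $\tilde{X}_{n,i}$, which are themselves a well-defined family of nonnegative random variables. First I would verify that the hypotheses of Theorem~\ref{theorem:RenewalsNon-blocking} are inherited by the new system: the $\tilde{X}_{n,i}$ are iid across $i$ for each fixed $n$ (each is $X_{n,i}$ with probability $\pi_n$ and $0$ otherwise, with the Bernoulli selection variables drawn independently), the servers remain pairwise independent since the selection events are independent across $n$ and the $\{X_{n,i}\}$ were already independent across servers, and the arrival process is untouched, hence still renewal and independent of the service times. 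I would also record the MGF identity $\alpha_n^{*}(\theta) = (1-\pi_n) + \pi_n \alpha_n(\theta)$, which is immediate from the mixture definition, and note that $\alpha_n^{*}$ is finite on the same interval where $\alpha_n$ is, so the transform hypotheses carry over.

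Next I would check that the stability condition needed to invoke Theorem~\ref{theorem:RenewalsNon-blocking} on the thinned system still holds. Since $\Eof{\tilde{X}_{n,i}} = \pi_n \Eof{X_{n,i}} \leq \Eof{X_{n,i}} < \Eof{T_1}$ under the assumed $\max_{n} \Eof{X_{n,i}} < \Eof{T_1}$, we have $\max_{n \in \setN{N}} \Eof{\tilde{X}_{n,i}} < \Eof{T_1}$, so the argument cited in the excerpt (via \cite{poloczek2014scheduling, boxma1994queueing}) guarantees a positive solution $\theta_n$ of $\alpha_n^{*}(x)\beta(x) = 1$ for each $n$, and $\tilde{\theta} \defeq \min_n \theta_n > 0$ is well defined. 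With these facts in hand, applying Theorem~\ref{theorem:RenewalsNon-blocking} verbatim to the system with service times $\tilde{X}_{n,i}$ yields exactly the claimed bounds, except that the response-time bound produced by that theorem carries a factor $\alpha_n^{*}(\theta_n)$ rather than $\alpha_n(\theta_n)$.

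The last step, and the only place any real care is needed, is to reconcile that prefactor: I must show the response-time bound can be written with $\alpha_n(\theta_n)$ in place of $\alpha_n^{*}(\theta_n)$. One route is to revisit the martingale/Doob argument sketched after Theorem~\ref{theorem:RenewalsNon-blocking}: the extra $\alpha_n(\cdot)$ factor in the response time arises from the single ``in-service'' task $X_{n,0}$ appearing in $R$ (the sum $\sum_{i=0}^{k}$ rather than $\sum_{i=1}^{k}$), and a job that has actually been dispatched to server $n$ genuinely has an $X_{n,0}$-distributed residual there, so the relevant prefactor for that server is $\alpha_n(\theta_n)$, not the mixture $\alpha_n^{*}(\theta_n)$; I would make this precise by conditioning on the selection event for server $n$ at the tagged arrival. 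Alternatively, if the intended statement simply inherits $\alpha_n^{*}(\theta_n)$ from a black-box application of Theorem~\ref{theorem:RenewalsNon-blocking} and the excerpt's ``$\alpha_n(\theta_n)$'' is to be read as the transform of the operative service time, then no reconciliation is needed and the proof is a one-line substitution. I expect this prefactor bookkeeping to be the main obstacle; everything else is an immediate transfer of the previous theorem to the relabelled service-time family.
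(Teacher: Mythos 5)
Your proposal is correct and follows essentially the same route as the paper: the waiting-time bound transfers verbatim with $\alpha_n$ replaced by $\alpha_n^{*}$ in the martingale normalization, and the prefactor subtlety you flag for the response time is resolved exactly as you suggest --- the paper's response-time process is $Y_n(k) = \myExp{\theta_n ( X_{n,0} + \sum_{i=1}^{k}\tilde{X}_{n,i} - \sum_{i=1}^{k} T_i)}$ with the \emph{unthinned} $X_{n,0}$ as initial term, so that $\sup_k \Eof{Y_n^{+}(k)} = \alpha_n(\theta_n)$ while the increments use $\alpha_n^{*}(\theta_n)\beta(\theta_n)=1$. Your first ``route'' for the prefactor is the one the paper takes; the ``alternatively'' reading is not needed.
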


\proofLocation{The proof is provided in Sect.~\ref{sec:Appendix}.}{The proof is provided in \cite{KhudaBukhsh2016techreport}.} 

%
%
%

\noindent\textbf{Example: Mixed server pool with different availability.}
Consider 
a pool of heterogeneous servers that are available according to some probability $\pi_i$. For simplicity,
we consider only three heterogeneous servers used for parallel processing. Note that this scenario can be easily generalized to $N$ servers using Thm.~\ref{theorem:PRMappingRenewalsNon-blocking}. For the sake of simplicity, we assume that the task service times are exponentially distributed with server specific rates $\mu_i$ and that jobs arrive according to some renewal process with exponentially distributed inter-arrival times with parameter $\lambda$. Note that the probability $\pi_i$ also signifies the fraction of time  server $i$ is used, hence, it is directly related to the computation cost in case of time priced resources.

Fig.~\ref{fig:RPM_No_Scaling} shows the change in the mean and the percentile of the waiting time due to the addition of  a server with a selection probability $\pi_i$ to a system of two permanently used servers each with $\pi_j=1$. For example, the lowest curve in Fig.~\ref{fig:RPM_No_Scaling} (left) shows the increase in the average waiting time if the slowest server is added with increasing probability $\pi_i$.
\remarkAlex{changed Fig a to left}

\noindent\textbf{Optimal Strategy.} It can be shown that the bound in Thm.~\ref{theorem:PRMappingRenewalsNon-blocking} is an increasing function of the number of servers $N$ and that the decay rate $\tilde{\theta}$ can be maximized,
\ie, the bound can be minimized by choosing only the the strongest server.




\begin{figure}
	\begin{subfigure}[b]{0.23\textwidth}
		\includegraphics[scale=0.17]{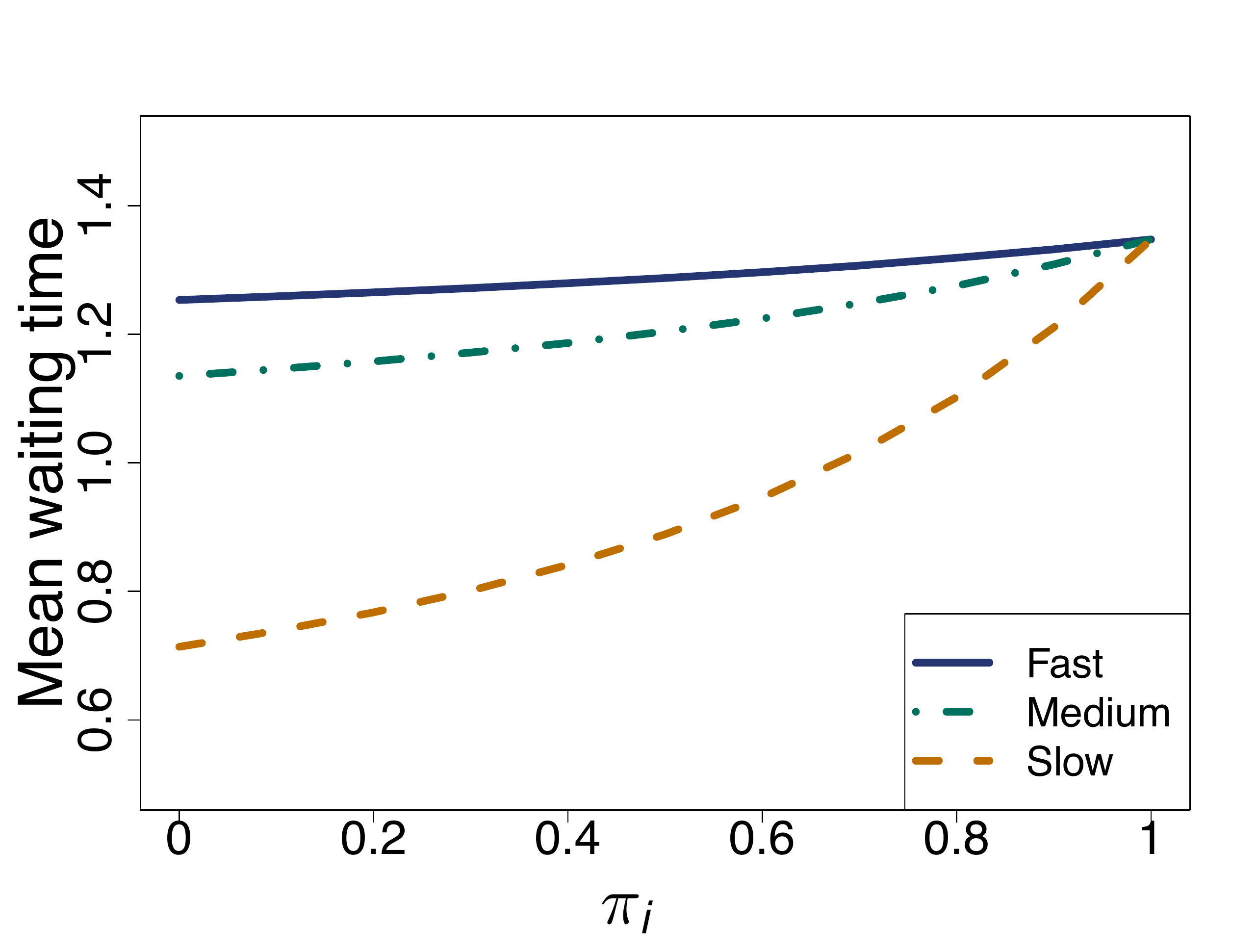}
	\end{subfigure}
	\begin{subfigure}[b]{0.23\textwidth}
		\includegraphics[scale=0.17]{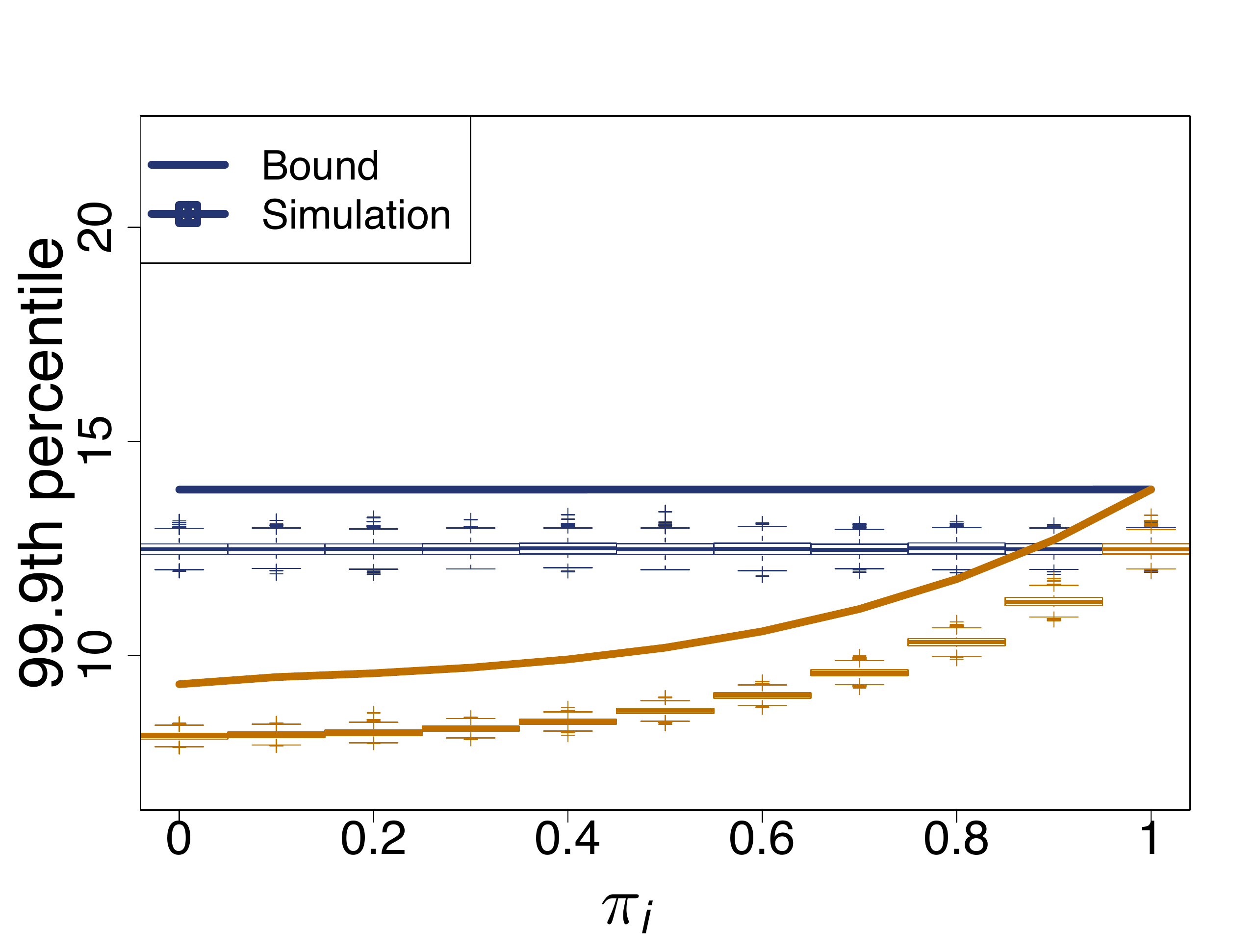}
	\end{subfigure}
		\caption{ 
			Impact of the degree of usage of a server on the mean (left) and the $99.9$-th percentile (right) of the steady-state waiting times. We consider a pool of three heterogeneous servers (fast, medium, slow), where tasks are always scheduled on two servers and the third server is included with probability $\pi_i$. Parameters:  service exponential rates $( \mu_1, \mu_2, \mu_3)  = (1.5,1.25,1)$  
			and inter-arrival exponential rate $\lambda=0.5$. 
		}
		\label{fig:RPM_No_Scaling}
		\vspace{-3pt}
\end{figure}



\section{Scheduling under application specific scaling}
\label{sec:application_scaling}

In this section, we analyze scheduling in FJ 
systems under application specific workloads. We build on the fact that different applications receive different gains from parallelization that lies in the nature of the application itself. Consider for example a Monte-Carlo simulation and a video transcoding application. In the first case, the gain from parallelization is strong and apparent, while in the second case, the gain from parallelization may vary depending on different factors such as the dependency between video macroblocks~\cite{chong2007efficient,mesa2009scalability}. We capture these varying gains using the notion of scaled service times. 
Moreover, in Fork-Join systems (e.g. MapReduce) there is a synchronization price that increases with the number of servers $N$ \cite{baccelli1989fork,Rizk2015Sigmetrics}.
We make the case that given these two opposing forces, the scheduling strategy that chooses the number of utilized  servers in an FJ system can be optimized to minimize the waiting and response times in the system. We begin with the initial case of homogeneous servers before discussing the more general case of heterogeneous servers.

%
%
%
%
\subsection{Homogeneous Servers - Linear scaling}
\label{sec:homogeneous_servers_linear}

The first natural scaling that we analyze is what we call \emph{linear scaling}\footnote{Linear scaling has been introduced in \cite{Rizk2016} for a fixed number of homogeneous servers~$N$ without considering scheduling strategies.}. This 
is motivated by examples of FJ systems where incoming jobs are equally divided among the servers.
%
%
Consider an FJ system with $N$ parallel, identical servers 
fed by renewal job arrivals with inter-arrival times $T_i$. We choose the servers probabilistically and once chosen, stick to them for a long time.
This allows us to write down steady-state representations conditional on the chosen set.
Let the random variable $S  \sim f_S \in  \mathcal{P} (\setN{N}, 2^{\setN{N}} ) $ denote the number of servers chosen to split an incoming job into, where $  \mathcal{P} (\setN{N}, 2^{\setN{N}} )$ is the class of all probability distributions on $ (\setN{N}, 2^{\setN{N}} )$\footnote{We use the symbol $2^A$ to denote the power set of a set $A$.}.
Let the service times at the $n$-th server $X_{n,i}$ be iid for all $i \in \setOfNaturals$ and $n \in \setN{N}$. Suppose the unscaled service time at each server is distributed as $X$, \ie, $ X_{n,i} \mid \{ S=1 \} \disteq X$
for some $ X$
with MGF $\alpha(x)$. We model the reduction of the average amount of work to be performed by each server when we use multiple servers using the following scaling of service times
\begin{equation}
X_{n,i} \mid \{ S=s \} \disteq \frac{X}{s} \eqpunkt
\label{eq:scaling_assumption}
\end{equation}
%

Now, conditional on the given number of used servers $ \{ S=s \}$ for some $s \in \setN{N}$, the steady-state waiting times $W$ and the response times $R$ can be represented as in \eqref{eq:steady_state_renewal_nonblocking} with $\setN{N}$ replaced by $\setN{s}$. We have the following result.
%
%
\begin{myTheorem}
Consider a stable FJ system with $N$ parallel work-conserving 
servers and renewal job arrivals with inter-arrival times $T_i$, for $i \in \setOfNaturals$. Let 
$S \sim  f_S \in  \mathcal{P} (\setN{N}, 2^{\setN{N}} ) $ denote the number of servers chosen to split an incoming job into. Let the unscaled service times $X$ and the inter-arrival times  $T$ be exponentially distributed with parameters $\mu$ and $\lambda$, respectively.
For service times $X_{n,i}$ at the $n$-th server that are scaled as in \eqref{eq:scaling_assumption}
	independently for all $ n \in \setN{S}, i \in \setOfNonnegativeIntegers $, the steady state waiting and response times are bounded as
	\begin{align*}
		\probOf{ W \geq \sigma  }  \leq {}& e^{\lambda \sigma }   \Eof{S e^{- \mu \sigma S}}~\eqkomma \\
		\probOf{ R \geq \sigma }  \leq {} &  \frac{ e^{  \lambda \sigma }  }{\rho} \Eof{ S^2 e^{  - \mu \sigma S}}~\eqkomma
	\end{align*}
where $\rho = \frac{\lambda}{\mu}$ is the unscaled utilization level and  the optimal strategy with respect to the bound for the waiting time 
	is
	\begin{align*}
	S_{opt} \sim f_{opt} ={}& \argmin_{  f_S \in  \mathcal{P} (\setN{N}, 2^{\setN{N}} )  }  \Eof{S e^{- \mu \sigma S}} \eqpunkt
	\end{align*}	
%
	\label[theorem]{theorem:RenewalsNon-blockingRPM_Scaled}
\end{myTheorem}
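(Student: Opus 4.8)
The plan is to condition on the realized number of used servers and invoke \Cref{theorem:RenewalsNon-blocking}. Fix $s \in \setN{N}$ with $f_S(s) > 0$. On the event $\{S = s\}$ an arriving job is split among $s$ statistically identical servers, each with service time distributed as $X/s$, i.e.\ exponential with rate $\mu s$, while the inter-arrival times remain exponential with rate $\lambda$. Since the system is assumed stable, $\rho = \lambda/\mu < 1$, hence $\lambda < \mu \leq \mu s$, so the conditional system is itself a stable homogeneous FJ system and \Cref{theorem:RenewalsNon-blocking} applies with $N$ replaced by $s$.

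Next I would carry out the elementary transform computation. With $\alpha_n(x) = \Eof{e^{x X/s}} = \mu s/(\mu s - x)$ and $\beta(x) = \Eof{e^{-x T}} = \lambda/(\lambda + x)$, the equation $\alpha_n(x)\beta(x) = 1$ simplifies to $x(\mu s - \lambda - x) = 0$, whose positive root is $\theta_n = \mu s - \lambda$ for every $n \in \setN{s}$; consequently $\tilde{\theta} = \mu s - \lambda$ as well, and $\alpha_n(\theta_n) = \mu s/\bigl(\mu s - (\mu s - \lambda)\bigr) = \mu s/\lambda$. Substituting $\theta_n = \tilde{\theta}$ into the bounds of \Cref{theorem:RenewalsNon-blocking} collapses the sums over $n$ into $s$ identical terms, giving the conditional estimates $\probOf{ W \geq \sigma \mid S = s } \leq s\, e^{-(\mu s - \lambda)\sigma} = e^{\lambda\sigma}\, s\, e^{-\mu\sigma s}$ and $\probOf{ R \geq \sigma \mid S = s } \leq s \cdot (\mu s/\lambda) \cdot e^{-(\mu s - \lambda)\sigma} = \rho^{-1}\, e^{\lambda\sigma}\, s^2\, e^{-\mu\sigma s}$.

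Finally I would remove the conditioning by the law of total probability: multiplying by $f_S(s)$ and summing over $s$ yields $\probOf{ W \geq \sigma } \leq e^{\lambda\sigma}\,\Eof{S\, e^{-\mu\sigma S}}$ and $\probOf{ R \geq \sigma } \leq \rho^{-1}\, e^{\lambda\sigma}\,\Eof{S^2\, e^{-\mu\sigma S}}$, which are the claimed bounds. Because the prefactor $e^{\lambda\sigma}$ does not depend on $f_S$, minimizing the waiting-time bound over $\mathcal{P}(\setN{N}, 2^{\setN{N}})$ is equivalent to minimizing $\Eof{S\, e^{-\mu\sigma S}}$, so $S_{opt} \sim f_{opt} = \argmin_{f_S \in \mathcal{P}(\setN{N}, 2^{\setN{N}})} \Eof{S\, e^{-\mu\sigma S}}$.

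The one point deserving care — a conceptual rather than a computational obstacle — is justifying that one may use the steady-state (stationary) representation conditional on $\{S = s\}$ and then average over $S$. This is precisely what the modelling assumption of fixing the chosen server set for a long horizon buys us, and I would state it explicitly before invoking \Cref{theorem:RenewalsNon-blocking} per realization; the rest of the argument is the routine rational-function algebra above.
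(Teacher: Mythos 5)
Your proposal is correct and follows essentially the same route as the paper: condition on $\{S=s\}$, apply the Theorem~\ref{theorem:RenewalsNon-blocking} machinery to the conditional homogeneous system with decay rate $\theta_s = s\mu - \lambda$ (the paper phrases this via $g_s(u)=\alpha(u/s)$, which equals your $\mu s/(\mu s - u)$), and then de-condition by the law of total probability. Your explicit remark that the stationarity-conditional-on-$S$ step rests on the modelling assumption of sticking with the chosen server set is exactly the justification the paper invokes in Sect.~\ref{sec:homogeneous_servers_linear}.
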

\proofLocation{The proof is provided in Sect.~\ref{sec:Appendix}.}{The proof is provided in \cite{KhudaBukhsh2016techreport}.} 
For a given choice of the distribution of $S$, which we call a \emph{strategy}, the bounds in Thm.~\ref{theorem:RenewalsNon-blockingRPM_Scaled} can be computed exactly, for it  involves only a summation of finitely many terms. Note that the optimization is essentially over 
a probability $N$-simplex
$\Delta_{N} \defeq \{ (p_1,p_2,\ldots,p_N) \in [0,1]^N \mid \sum_{k=1}^{N} p_k =1  \}$.

\noindent \textbf{Interpretation of the server selection strategy:} 
A strategy can be interpreted in two ways:
 \emph{(i)}
it actively arises through users' selection of 
different numbers of servers to utilize,  or 
\emph{(ii)}
it passively arises through a variable number of provided servers that are price volatile, e.g., spot instances at a given budget. In the following, we mainly take the former as an example for strategy derivations.

Note that different strategies lead to varying performance bounds, e.g., consider the case where we select the number of used servers uniformly at random from the pool of~$N$ servers, \ie, $\probOf{S=s} = (1/N ) \indicator{s \in \setN{N}}$. Then,  for $a>0$, 
$	\Eof{S e^{-aS}}  = \frac{e^{-a}}{N (1- e^{-a})}  [ \frac{1-  e^{ - (N+1)a } } {(1- e^{-a}) }
- (N+1) e^{- a N} ] $, and 
$	\Eof{S^2 e^{-aS}}  =    \frac{ e ^ {-2 a}}{N (1- e^{-a})  }  [    2 \frac{(1- e^{-(N+1)a} ) }{   (1- e^{-a}) ^2 }    
 - \frac{  2(N+1)e^{-Na}  - (1-  e^{ - (N+1)a })  }{ (1- e^{-a})  }  
- (N+1)  (  N e^{- (N-1)a} +  e^{- a N}  ) ] $. 
Setting $a=\mu \sigma$, closed-form expressions for the bounds in Thm.~\ref{theorem:RenewalsNon-blockingRPM_Scaled} are obtained. The uniform distribution allows little control over the number of selected servers.  To control the average number of utilized servers $\Eof{S}$ we
employ what we call a \emph{Binomial strategy}, i.e., we let $S$ follow a truncated binomial distribution on $\setN{N}$ with parameters $N$ and $p \in (0,1]$,
%
\begin{equation*}
\probOf{S=s } =  \frac{\binom{N}{s} p^s q^ {N-s}}{1 - q^N}  \indicator{s \in \setN{N}} \eqkomma
\end{equation*}
writing $q \defeq 1-p$. With abuse of notation, we write
$S \sim \Bin{N}{p}$. Given the total number of available servers~$N \in \setOfNaturals$, the binomial strategy allows us to vary $p$ to control the desired number of on average utilized servers $Np/ (1- q^N)$. 

Computing the expectations in Thm.~\ref{theorem:RenewalsNon-blockingRPM_Scaled} for $S \sim \Bin{N}{p}$, 
we get the following bounds
 \begin{align}
 \label{eq:bound_waitingtime_binomial}
 	\probOf{ W \geq \sigma  }  \leq {} &  N e^{- \theta \sigma} [ \frac{ p }{1 - q^N} ( p e^{- \mu \sigma} +q  )^{N-1}  ] \\
 	\probOf{ R \geq \sigma }	 \leq {} & \frac{N e^{-\theta \sigma } }{\rho } [ \frac{p  }{1 - q^N}  (  N  p e^{- \mu \sigma} + q  )   ( p e^{- \mu \sigma} +q  )^{N-2} ] \nonumber \eqpunkt
 \end{align}
%
\proofLocation{The proof is provided in Sect.~\ref{sec:Appendix}.}{The proof is provided in \cite{KhudaBukhsh2016techreport}.}

\noindent \textbf{Optimizing the Binomial strategy:} Our next goal  is to minimize the waiting times given a binomial strategy for server selection. Precisely, given $N$ available servers we look for $p$ that minimizes  the right hand side of \eqref{eq:bound_waitingtime_binomial} at some percentile~$\sigma$, \eg, the $99.9$-th percentile.
First, we rewrite the right hand side of \eqref{eq:bound_waitingtime_binomial} as $N e^{- \theta \sigma} \left[ ( \epsilon q +1 - \epsilon  )^{N-1}  /\sum_{k=0}^{N-1} q^k  \right]$
where we define $\epsilon \defeq 1- e^{- \mu \sigma }$. Next, we define $\psi : [0,1) \rightarrow \setOfPositiveReals $ as $ \psi(q) \defeq ( \epsilon q +1 - \epsilon  )^{N-1}  /\sum_{k=0}^{N-1} q^k  $ and study its behavior. Taking  derivative with respect to $q$, we get 
\begin{align*}
	\frac{\,d}{\,dq}\psi(q)
	={} & \frac{ ( \epsilon q +1 - \epsilon  )^{N-2}  }{  (\sum_{k=0}^{N-1} q^k  )^2 } \sum_{k=0}^{N-2}  ( N\epsilon -1-k) q^k  \eqpunkt
\end{align*}
Since $ ( \epsilon q +1 - \epsilon  )^{N-2}  /  (\sum_{k=0}^{N-1} q^k  )^2  >0$, the sign of the derivative is dictated by sign of the polynomial $\mathcal{Q}(q) \defeq \sum_{k=0}^{N-2}  ( N\epsilon -1-k) q^k $. Note that the coefficients $ \{ N\epsilon -1-k \}_{k \in  \{0\}  \cup \setN{N-2}}$ of the polynomial are monotonically decreasing, implying there is only one change of sign of the coefficients so that by \emph{Descartes' rule of signs}, there is at most one real root of $\mathcal{Q}(q) =0$. Consequently, the same holds true for $\frac{\,d}{\,dx}\psi(x) $.
Now, observe that $ \mathcal{Q}(0) =N\epsilon-1 >0$ if $ \epsilon > 1/N$. On the other hand, 
$ 	\mathcal{Q}(1) =  N (N-1)(\epsilon -1/2) >0 $ if $ \epsilon > 1/2  \iff \sigma  >    (1/\mu) \ln(  2).$ This condition on the $99.9$-th percentile of the waiting time holds except for corner cases with nearly no queueing. 
%
This gives us a sufficient condition for $ \frac{\,d}{\,dx}\psi(x) >0$ implying that $\psi(q)$ is an increasing function of $q$ on $ \epsilon > 1/2$. In other words, the tail bound is a decreasing function of $p$. \emph{Therefore, the optimal strategy would be to set $p_{opt}=1$ and use all $N$ available servers  to make the most of the scaling benefit.}
%
%
Our analytic arguments are also numerically validated using simulations, e.g., Fig.~\ref{fig:Binom_scaling_Percentiles_2}. 

\noindent\textbf{Optimization under budget constraint:}
In the  interesting scenario of an application with a budget constraint on the average number of servers it uses, the above reduces to a constrained optimization problem. Precisely, if we have a budget constraint of the form $\Eof{S} \leq S^*$, the optimization problem can be stated as
\begin{align*}
\min N e^{- \theta \sigma} [ \frac{ p }{1 - q^N} ( p e^{- \mu \sigma} +q  )^{N-1}  ] {}& \text{ \quad s. t. } \frac{Np}{1- q^N} \leq  S^* \eqkomma 
\end{align*}
leading to $p^*  = \sup \{ p \in (0,1] \mid \sum_{k=0}^{N-1} (1-p)^k \geq \frac{N }{S^*  } \} $ so that $f_{opt}= \Bin{N}{p^*} $, 
In general, the given bound can always be numerically optimized for any~$\sigma$.

\noindent\textbf{Generalization to Power series strategies:} To obtain bounds in the more general setup of a power series strategy, we assume
\begin{equation}
\probOf{S=s} \defeq \frac{ a_s \kappa^ s }{\zeta(\kappa)}   \indicator{s \in \setOfNaturals} \eqkomma
\end{equation}
where $\zeta(\kappa) \defeq \sum_{k \in \setOfNaturals} a_k \kappa^k < \infty $ for some $\kappa >0$ and $a_k \geq 0 ~ \forall k \in \setOfNaturals$. We denote this distribution by $\PowerDistribution{ \kappa, \zeta}$  
%
and the corresponding bounds on the waiting and response time distributions in Thm.~\ref{theorem:RenewalsNon-blockingRPM_Scaled} evaluate to
	\begin{align*}
		\probOf{ W \geq \sigma  }  \leq {} & e^{\lambda \sigma }  \frac{ \kappa e^{- \mu \sigma  } \zeta'(\kappa  e^{- \mu \sigma  } )  }{\zeta(\kappa)} \eqkomma \\
		\probOf{ R \geq \sigma }  \leq {} &  \frac{ e^{  \lambda \sigma }  }{\rho} \frac{ \kappa e^{-\mu \sigma  }   }{\zeta(\kappa)} [ \kappa e^{-\mu \sigma  }  \zeta''(\kappa  e^{-\mu \sigma  } ) +   \zeta'(\kappa  e^{-\mu \sigma } ) ]  \eqpunkt
	\end{align*}
\proofLocation{The proof is provided in Sect.~\ref{sec:Appendix}.}{The proof is provided in \cite{KhudaBukhsh2016techreport}.}  For a given form of $\zeta$, the strategy can be optimized to minimize the waiting times. We skip this optimization due to the lack of space. Please note that the above is the \emph{most} general result of this kind.
\subsection{Homogeneous Servers - Partial scaling}
\label{sec:homogen_partial}
In the previous subsection we considered  
linear scaling of the form \eqref{eq:scaling_assumption} that models a perfect work division over $s$ utilized servers in the sense of $\Eof{X_{n,i}} = \Eof{X}/s$.
In this section, we analyze the general case of application specific scaling,
i.e., where the parallelization benefit due to using more servers depends on the application itself. Two prominent examples are: \emph{(i)} MapReduce scenarios where the servers have to separately calculate a state before starting the task executions, and \emph{(ii)} parallelized video transcoding, where some involved decoding operations have a diminishing return on parallelization \cite{chong2007efficient,mesa2009scalability}.



Mathematically, we assume that for a certain application with scaling coefficient $\varphi \in [0,1]$, the following scaling down of service times holds,
\begin{equation}
X_{n,i} \mid \{ S=s \} \disteq \frac{X}{s^ \varphi } \eqpunkt
\label{eq:power_scaling_assumption}
\end{equation}

Given $ \{ S=s \}$, the steady-state waiting times $W$ and the response times $R$ have the same representation as in \eqref{eq:steady_state_renewal_nonblocking} where we need to replace~$N$ with $s$.
%
%
Now, we present our bounds in the partial scaling regime.
\begin{myTheorem}
	Consider a stable FJ system with $N$ parallel work-conserving 
servers and renewal job arrivals with inter-arrival times $T_i$, for $i \in \setOfNaturals$. Let the random variable $S \sim f_S \in  \mathcal{P} (\setN{N}, 2^{\setN{N}} ) $ denote the number of servers chosen to split an incoming job into. Let the unscaled service times $X$ and the inter-arrival times  $T$ be exponentially distributed with parameters $\mu$ and $\lambda$, respectively.
For service times $X_{n,i}$ at the $n$-th server that are scaled as in \eqref{eq:power_scaling_assumption} for some $\varphi \in [0,1]$
%
%
%
the steady state waiting and response times are bounded as
	\begin{align*}
		\probOf{ W \geq \sigma  }  \leq {} & e^{\lambda \sigma }   \Eof{S \myExp{- \mu \sigma S^ \varphi}} \eqkomma \\
		\probOf{ R \geq \sigma }  \leq {} &  \frac{ e^{  \lambda \sigma }  }{\rho} \Eof{ S^2 \myExp{  - \mu \sigma S^ \varphi}}  \eqkomma
	\end{align*}
	where $\rho = \frac{\lambda}{\mu}$ is the unscaled utilization level. The optimal strategy with respect to the bound for the waiting time 
	is
	\begin{align*}
	S_{opt} \sim f_{opt} ={}& \argmin_{  f_S \in  \mathcal{P} (\setN{N}, 2^{\setN{N}} )  }  \Eof{S e^{- \mu \sigma S^ \varphi} } \eqpunkt
	\end{align*}
	%
	\label[theorem]{theorem:PowerScaling}
\end{myTheorem}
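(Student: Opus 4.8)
The plan is to derive Theorem~\ref{theorem:PowerScaling} as a corollary of Theorem~\ref{theorem:RenewalsNon-blocking} by conditioning on the number $S$ of servers that an incoming job is split into, exactly in the spirit of the linear‑scaling case (Thm.~\ref{theorem:RenewalsNon-blockingRPM_Scaled}). First I would fix $s \in \setN{N}$ and condition on the event $\{S=s\}$. By the scaling rule \eqref{eq:power_scaling_assumption} and the assumed independence of the scaled service times, the conditioned system is a \emph{homogeneous} FJ system with $s$ work‑conserving servers whose common per‑task service time is an iid copy of $X/s^\varphi$; since $X \sim \ExpDistribution{\mu}$, this distribution is $\ExpDistribution{\mu s^\varphi}$. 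The arrivals are still $\ExpDistribution{\lambda}$ renewals and the servers remain pairwise independent, so the hypotheses of Theorem~\ref{theorem:RenewalsNon-blocking} hold conditionally, and the conditional steady‑state waiting and response times admit the representation \eqref{eq:steady_state_renewal_nonblocking} with $\setN{N}$ replaced by $\setN{s}$.

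Next I would instantiate the decay rate for the conditioned system. With $\alpha(x)=\mu s^\varphi/(\mu s^\varphi-x)$ and $\beta(x)=\lambda/(\lambda+x)$, the equation $\alpha(x)\beta(x)=1$ simplifies to $x(\mu s^\varphi-\lambda-x)=0$, whose positive root is $\theta_s \defeq \mu s^\varphi-\lambda$; stability ($\lambda<\mu\le\mu s^\varphi$ for every $s\ge 1$) guarantees $\theta_s>0$. Because all $s$ servers are identical, $\theta_n=\theta_s$ for every $n$ and hence $\tilde\theta=\theta_s$, so the sums in Theorem~\ref{theorem:RenewalsNon-blocking} collapse: $\probOf{W\ge\sigma\given S=s}\le s\,e^{-\theta_s\sigma}=e^{\lambda\sigma}\,s\,\myExp{-\mu\sigma s^\varphi}$. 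For the response time, evaluating the MGF at its own decay rate gives $\alpha(\theta_s)=\mu s^\varphi/\lambda=s^\varphi/\rho$, whence $\probOf{R\ge\sigma\given S=s}\le s\,\alpha(\theta_s)\,e^{-\theta_s\sigma}=\frac{e^{\lambda\sigma}}{\rho}\,s^{1+\varphi}\myExp{-\mu\sigma s^\varphi}$; since $\varphi\le 1$ and $s\ge 1$ we have $s^{1+\varphi}\le s^2$, which yields the slightly looser but cleaner form stated in the theorem.

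Finally I would de‑condition via the law of total probability: $\probOf{W\ge\sigma}=\sum_{s\in\setN{N}}\probOf{S=s}\,\probOf{W\ge\sigma\given S=s}\le e^{\lambda\sigma}\,\Eof{S\,\myExp{-\mu\sigma S^\varphi}}$, and analogously $\probOf{R\ge\sigma}\le\frac{e^{\lambda\sigma}}{\rho}\,\Eof{S^2\myExp{-\mu\sigma S^\varphi}}$. The optimality claim is then immediate, since the prefactor $e^{\lambda\sigma}$ (resp.\ $e^{\lambda\sigma}/\rho$) does not depend on $f_S$: minimizing the waiting‑time bound over $\mathcal{P}(\setN{N},2^{\setN{N}})$ is exactly minimizing $\Eof{S\,e^{-\mu\sigma S^\varphi}}$, which defines $f_{opt}$. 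I expect no genuine obstacle here, as the statement is essentially a reweighting of Theorem~\ref{theorem:RenewalsNon-blocking}; the one step that needs care is the conditioning argument, i.e., justifying that \quotes{choosing the servers and keeping them fixed for a long time} legitimately produces the per‑$s$ steady‑state representation \eqref{eq:steady_state_renewal_nonblocking} to which the tower property over the tail events may then be applied. The only new computation relative to the linear case is re‑solving $\alpha(x)\beta(x)=1$ with service rate $\mu s^\varphi$ and reading off $\alpha(\theta_s)=s^\varphi/\rho$.
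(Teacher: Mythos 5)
Your proposal is correct and follows essentially the same route as the paper: condition on $\{S=s\}$, apply the Theorem~\ref{theorem:RenewalsNon-blocking} machinery to the resulting homogeneous $s$-server system with decay rate $\theta_s = \mu s^{\varphi} - \lambda$, and de-condition via the law of total probability, with the optimality claim following because the prefactor is independent of $f_S$. You are in fact slightly more explicit than the paper in flagging that the conditional response-time bound $s\,\alpha(\theta_s)e^{-\theta_s\sigma} = \frac{e^{\lambda\sigma}}{\rho}s^{1+\varphi}\myExp{-\mu\sigma s^{\varphi}}$ is relaxed to the stated $s^{2}$ form via $s^{1+\varphi}\le s^{2}$.
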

\proofLocation{\noindent The proof is provided in Sect.~\ref{sec:Appendix}.}{\noindent The proof is provided in \cite{KhudaBukhsh2016techreport}.}
%
%
%
%
\begin{figure}
	\begin{subfigure}[b]{0.24\textwidth}
			\includegraphics[scale=0.47]{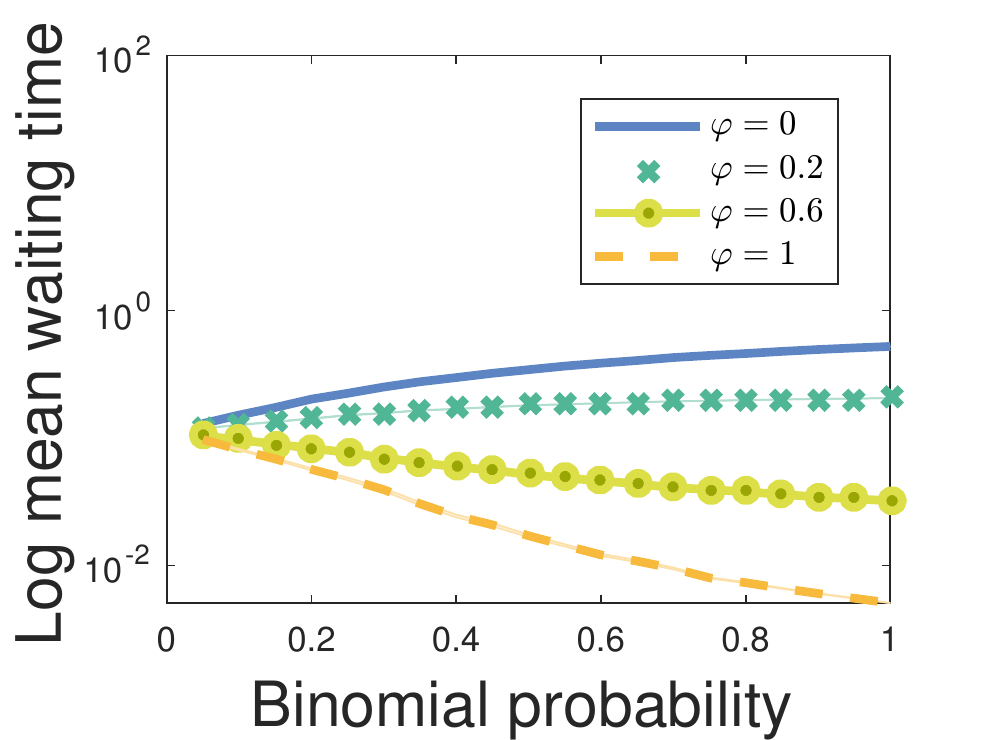}
	\end{subfigure}	\hfill
	\begin{subfigure}[b]{0.24\textwidth}
			\includegraphics[scale=0.47]{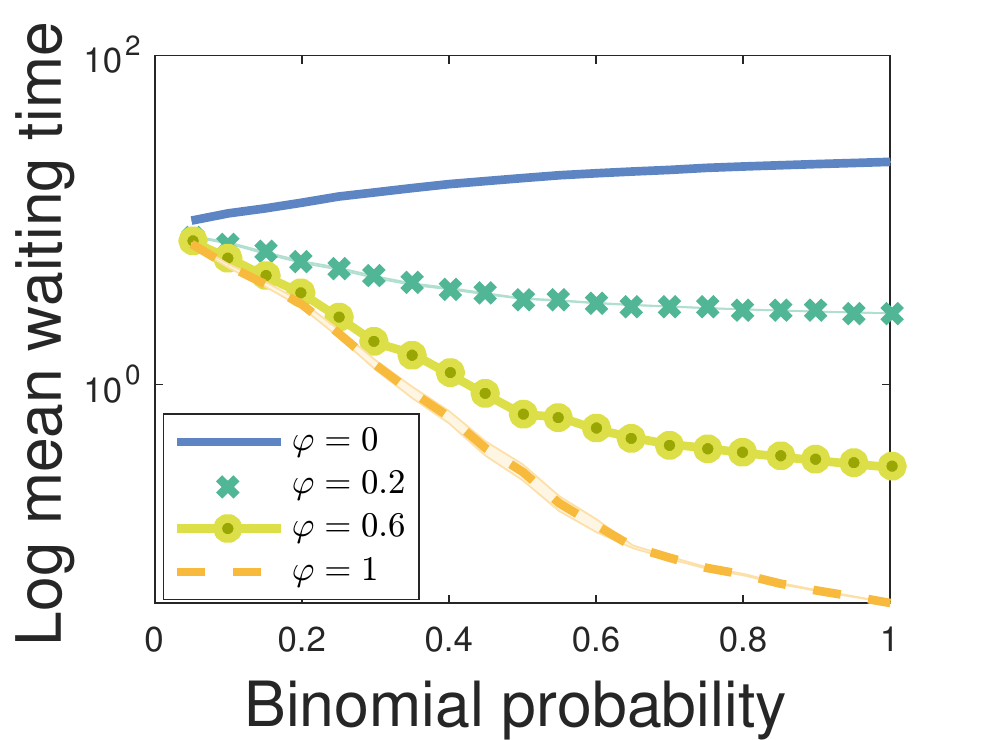}
	\end{subfigure}
	\caption{
The impact of the scheduling strategy (given by probability $p$) together with the parallelization benefit (given by increasing $\varphi$) on the mean waiting time in given FJ systems.
\newline Simulation parameters: $N=10$ servers, (Left) low utilization: $\lambda=0.1$. (Right) high utilization: $\lambda=0.9$.  }
\label{fig:w_vs_binom_p_diff_phi}
\end{figure}

\noindent\textbf{Insights into partial parallelization benefit:} Fig.~\ref{fig:w_vs_binom_p_diff_phi} conveys multiple insights into scheduling strategies under different application specific scaling $\varphi$. 
It depicts the mean waiting time in a given FJ system for different scheduling strategies given by the Binomial probability $p$ for various parallelization benefits given by the coefficient $\varphi$. 
%
The first insight from Fig.~\ref{fig:w_vs_binom_p_diff_phi} is the trade-off between the FJ inherent synchronization penalty and the parallelization benefit due to scaled service times.
For a given scheduling strategy in an FJ system, i.e., the probability $p$, we observe a decrease in the mean waiting time with increasing scaling benefit $\varphi$. Second, for low parallelization benefit $\varphi$, the synchronization penalty predominates leading to an increase in mean waiting times. We note that this phenomenon also depends on the utilization.
Finally, for high parallelization benefit $\varphi$, we observe a
decay of the mean waiting times with $p$, i.e., essentially increasing the average number of utilized servers $N p / (1-q^N)$. We observe a general diminishing behavior with $p$. Hence, for larger $\varphi$ substantial savings in server cost can be obtained by sacrificing a little in terms of the average waiting time. Fig.~\ref{fig:Binom_scaling_Percentiles_2} shows a similar behavior for the percentiles of the waiting time distribution.

Remarkably, we find that for any fixed stochastic strategy, i.e., $p \in (0,1]$ under no 
parallelization benefit, 
the percentiles of the waiting times grow as $\mathcal{O}(\log \Eof{S} )$. In case of no stochastic scheduling, i.e., $p=1$, we recover the behavior of $\mathcal{O}(\log N )$  known from \cite{baccelli1989fork,Rizk2015Sigmetrics}.

\noindent\textbf{Optimal strategy under partial scaling:} The prime motive of the analysis above 
is to gain analytic insights into the impact of the chosen number of servers on the waiting times 
for an application with a given scaling $\varphi$ in a fixed FJ system. In particular, given a $\varphi \in [0,1]$, we 
find the optimal stochastic scheduling strategy by minimizing the bound obtained in Thm.~\ref{theorem:PowerScaling}.  
%
%
Observe that as $\varphi \rightarrow 0$, the scaling benefit diminishes to zero yielding the unscaled case from Sect.~\ref{sec:HetFJSystems}. 
Further, as $\varphi \rightarrow 1$, we get greater scaling benefit. The optimal strategy, therefore, would be to choose all the servers if the scaling benefit outweighs the synchronization cost, and to choose only the strongest server if it does not. However, this depends on the parallelization benefit $\varphi$ specific to 
the given application. 
\begin{figure}
	\centering
	\includegraphics[width=0.5\textwidth]{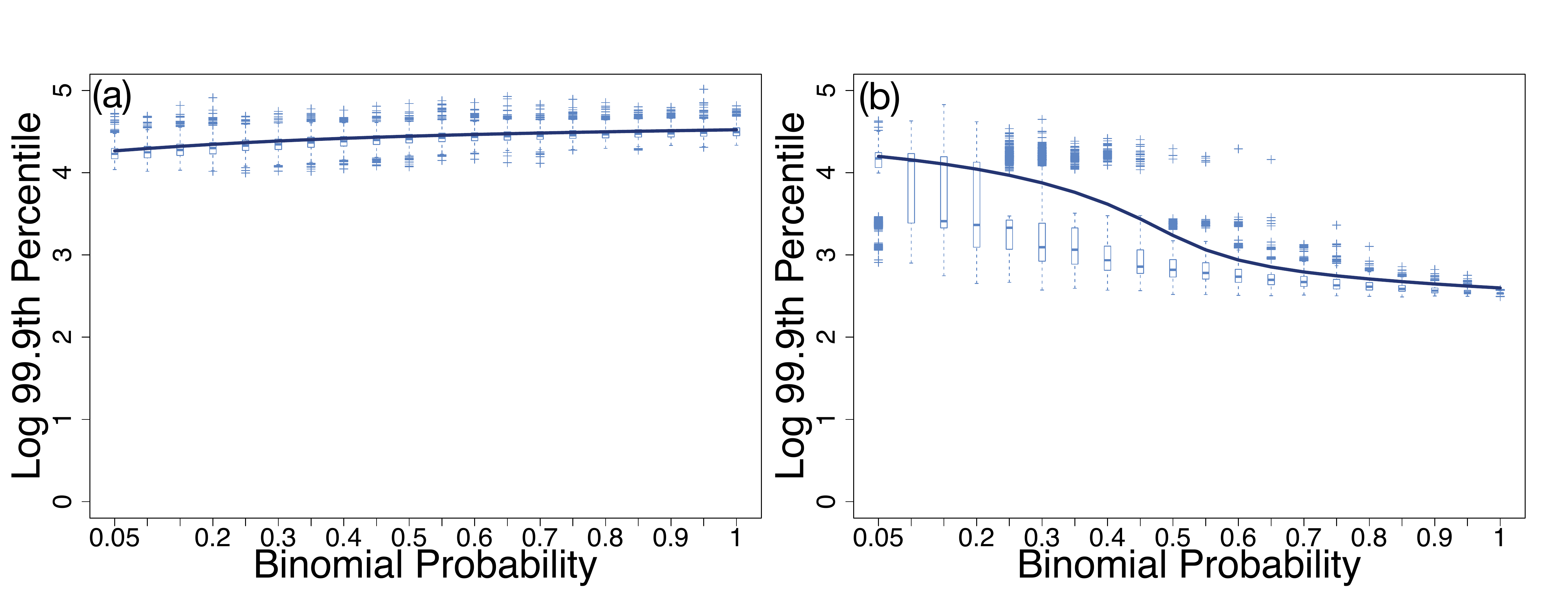}
	\caption{The impact of the scheduling strategy on the waiting time percentiles. Simulation parameters: $N=10, \lambda=0.9$, parallelization benefit: (a) $\varphi=0$ (b) $\varphi=0.2$. 
		}
\label{fig:Binom_scaling_Percentiles_2}
\end{figure}
\subsection{Heterogeneous Servers - Hierarchical Model}
In this section, we generalize our scaling discussion to the heterogeneous case,  building on the analytic intuitions gained in the previous section. We argue that the average service times at different servers are not  identical, but rather follow some suitable probability distribution (see Fig.~\ref{fig:HierarchicalModel}). 
Here, we assume  a randomly drawn server has an exponential service rate with parameter $\mu$ where $\mu$ itself is drawn from an underlying hierarchical distribution~$f_\mu$. 
We present the following result for such a setup, assuming the strict stability $\max_{n \in \setN{N}} \Eof{X_{n,1}} < \Eof{T_1}$.

%

\begin{myTheorem}
	Consider an FJ system with $N$ parallel work-conserving 
	servers fed by renewal job arrivals with iid exponentially distributed inter-arrival times $T_i$ with parameter $\lambda$, for $i \in \setOfNaturals$. Let the random variable $S \sim f_S \in  \mathcal{P} (\setN{N}, 2^{\setN{N}} )  $ denote the number of servers chosen to split an incoming job into and the unscaled service time $X_n$ at the $n$-th server be exponentially distributed with parameter 
$\mu_n \sim f_\mu$
For  service times $X_{n,i}$ at the $n$-th server that are scaled as
%
%
	\begin{equation*}
	X_{n,i} \mid \{ S=s \} \disteq \frac{X_n}{s^\varphi }  \eqkomma
	\end{equation*}
independently for all $ n \in \setN{s}, i \in \setOfNonnegativeIntegers $, $\varphi \in [0,1]$, the
 steady state waiting and response times are bounded as
	\begin{align*}
	\probOf{ W \geq \sigma  }  \leq {}& e^{\lambda \sigma }   \Eof{S \myExp{- \min_{n \in \setN{S}} \mu_n \sigma S ^\varphi  }}  \eqkomma \\
	\probOf{ R \geq \sigma }  \leq {} &  \frac{e^{\lambda \sigma }  }{\lambda} \Eof{ S^\varphi ( \sum_{n \in \setN{S} } \mu_n )  \myExp{- \min_{n \in \setN{S}} \mu_n \sigma S^\varphi }  }   \eqpunkt
	\end{align*}	
	The optimal strategy with respect to the bound above for the waiting time 
	is given by
	\begin{align*}
	S_{opt} \sim f_{opt} ={}& \argmin_{  f_S \in  \mathcal{P} (\setN{N}, 2^{\setN{N}} )  }    \Eof{S \myExp{- \min_{n \in \setN{S}} \mu_n \sigma S ^\varphi  }}  \eqpunkt
	\end{align*}

	\label[theorem]{theorem:scaling_in_heterogeneous}
\end{myTheorem}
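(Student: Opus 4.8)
The plan is to follow the conditioning template already used in Thms.~\ref{theorem:RenewalsNon-blockingRPM_Scaled} and~\ref{theorem:PowerScaling}: first condition on the event $\{S=s\}$ \emph{together with} the realized rates $\mu_1,\dots,\mu_s$ of the chosen servers, so that the system degenerates into a \emph{fixed} heterogeneous FJ system with $s$ work-conserving servers to which Thm.~\ref{theorem:RenewalsNon-blocking} (equivalently Thm.~\ref{theorem:PRMappingRenewalsNon-blocking} with $\pi_n=1$) applies; then integrate the conditional tail bounds against the joint law of $S$ and of the rates via the tower property $\probOf{W\ge\sigma}=\Eof{\probOf{W\ge\sigma\mid S,\mu_1,\dots,\mu_S}}$, and likewise for $R$. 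This is licit because, by construction, once the servers are chosen the system is stuck with them and admits the steady-state representation \eqref{eq:steady_state_renewal_nonblocking} with $\setN{N}$ replaced by $\setN{s}$.

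For the conditional step, observe that if $X_n\sim\ExpDistribution{\mu_n}$ then the scaled task service time $X_n/s^\varphi$ is $\ExpDistribution{\mu_n s^\varphi}$, with MGF $\alpha_n(\theta)=\mu_n s^\varphi/(\mu_n s^\varphi-\theta)$, while the inter-arrival times give $\beta(\theta)=\lambda/(\lambda+\theta)$. Solving $\alpha_n(\theta)\beta(\theta)=1$ yields the unique positive root $\theta_n=\mu_n s^\varphi-\lambda$, hence $\tilde\theta=\min_{n\in\setN{s}}\theta_n=s^\varphi\min_{n\in\setN{s}}\mu_n-\lambda$ and $\alpha_n(\theta_n)=\mu_n s^\varphi/\lambda$. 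Substituting into the bounds of Thm.~\ref{theorem:RenewalsNon-blocking}, the $\tilde\theta$-exponentials cancel and the conditional bounds collapse to $\probOf{W\ge\sigma\mid\cdot}\le e^{\lambda\sigma}\sum_{n\in\setN{s}}\myExp{-\mu_n s^\varphi\sigma}$ and $\probOf{R\ge\sigma\mid\cdot}\le (s^\varphi e^{\lambda\sigma}/\lambda)\sum_{n\in\setN{s}}\mu_n\myExp{-\mu_n s^\varphi\sigma}$. Relaxing each exponential to its largest value, $\sum_{n\in\setN{s}}\myExp{-\mu_n s^\varphi\sigma}\le s\,\myExp{-(\min_{n\in\setN{s}}\mu_n)s^\varphi\sigma}$ and $\sum_{n\in\setN{s}}\mu_n\myExp{-\mu_n s^\varphi\sigma}\le(\sum_{n\in\setN{s}}\mu_n)\,\myExp{-(\min_{n\in\setN{s}}\mu_n)s^\varphi\sigma}$, and taking expectations over $S$ and $f_\mu$ gives exactly the two stated bounds. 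The claim on $S_{opt}$ is then immediate: for fixed rates, $\lambda,\varphi,\sigma$ the waiting-time bound depends on the strategy only through $\Eof{S\myExp{-\min_{n\in\setN{S}}\mu_n\sigma S^\varphi}}$, so minimizing over $f_S\in\mathcal{P}(\setN{N},2^{\setN{N}})$ yields the asserted $\argmin$.

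The point needing the most care is ensuring that each conditioned subsystem is stable for (almost) every realization of the rates, i.e.\ that $\tilde\theta>0$, equivalently $\lambda<s^\varphi\min_{n\in\setN{s}}\mu_n$, holds almost surely; only then do the roots $\theta_n$ exist, the steady-state suprema in \eqref{eq:steady_state_renewal_nonblocking} are finite, the MGFs are finite at $\theta_n$ (note $\theta_n=\mu_n s^\varphi-\lambda<\mu_n s^\varphi$), and the interchange of expectation with the tail probability is valid. This is guaranteed by the assumed strict stability $\max_{n\in\setN{N}}\Eof{X_{n,1}}<\Eof{T_1}$ together with $s^\varphi\ge1$ for $s\ge1$ (the scaling only shrinks service times). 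The remaining ingredients — the cancellation of the $\tilde\theta$-terms and the relaxation to the $\min$ — are routine.
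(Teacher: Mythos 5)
Your proposal is correct and follows essentially the same route as the paper's own proof: condition on the number of chosen servers (and their realized rates), apply the martingale machinery of \Cref{theorem:RenewalsNon-blocking} to the scaled exponential service times to obtain the per-server decay rates $\theta_s^n = s^{\varphi}\mu_n - \lambda$ and the prefactors $\mu_n s^{\varphi}/\lambda$, relax to the minimum rate $\min_{n\in\setN{s}}\mu_n$, and then integrate over the strategy $f_S$ and the hierarchical rate distribution $f_\mu$. The only differences are presentational: you make the double conditioning on $(S,\mu_1,\dots,\mu_S)$ and the almost-sure stability requirement $\lambda < s^{\varphi}\min_n\mu_n$ explicit, which the paper leaves implicit.
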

\proofLocation{The proof is provided in Sect.~\ref{sec:Appendix}.}{The proof is provided in \cite{KhudaBukhsh2016techreport}.}

\noindent \textbf{Example: A two-class system:} Consider the case where there are only two types of servers in the system, \emph{fast} and \emph{slow}. In a cloud computing infrastructure, these two types would correspond to different monetary prices.
Suppose the exponential service rates of the two types of servers are $\kappa_1$ and $\kappa_2$, respectively, and the arrival rate is $\lambda$
with $\lambda < \kappa_1 < \kappa_2$. Denote the probability that a randomly drawn server is of type-$1$, \ie, has exponential service rate $\kappa_1$, by $\pi$. Hence, the service rate distribution is given by
\begin{equation}
	f_\mu(x) \defeq \pi^{\indicator{x= \kappa_1}  }   (1- \pi)^{ \indicator{x= \kappa_2} }  \eqpunkt
	\label{eq:mu_degree_twoType}
\end{equation}
Given $n$ random samples $\mu_1,\mu_2, \ldots, \mu_n$  from the above distribution, we require  
the first order statistic of the sample 
$Y_n \defeq \min_{i \in \setN{n}} \mu_i $ 
 to compute the bounds in Thm.~\ref{theorem:scaling_in_heterogeneous}. 
The distribution of $Y_n$ is given by 
$	\probOf{Y_n= \kappa_1} = 1- (1-\pi)^n = 1- \probOf{Y=\kappa_2}$, 
such that 
its MGF is
$	\Eof{e^{aY}} 
= \myExp{  a \kappa_1} - (    \myExp{  a \kappa_1} - \myExp{  a \kappa_2} ) (1-\pi)^n$, 
%
whence we can compute the bounds obtained in Thm.~\ref{theorem:scaling_in_heterogeneous} for different choices of distributions of the number of used servers~$S$. In particular, when $S \sim \Bin{N}{p}$ and we receive linear scaling $\varphi=1$, the upper bounds on the tail probabilities can be explicitly written  as
\begin{align*}
	\probOf{ W \geq \sigma  }  \leq {} &   e^{\lambda \sigma }   \frac{N p    }{1- q^N}  b_1(\sigma) [ 1-   (1-\pi)  (   \frac{c_1(\sigma)  - c_2(\sigma)   }{ b_1(\sigma) } ) ]  \eqkomma
\end{align*}
where $b_i(\sigma) \defeq \myExp{- \sigma   \kappa_i  } (p \myExp{-\sigma   \kappa_i } + q  ) ^{N-1} $  and $ c_i(\sigma) \defeq \myExp{- \sigma   \kappa_i  } (p  (1-\pi)  \myExp{-\sigma   \kappa_i     } + q  ) ^{N-1} $ for $i=1,2$. 
%
%

While the above example only considers two types of servers, it is worth mentioning that it can easily be extended to take into account finitely many types of servers.

\noindent \textbf{The hierarchical hyper-parameter model:}
%
In view of the stability of the system, we take $f_\mu$ to be a truncated exponential with (hyper-) parameter $\mu_0$, truncated at $\lambda$. That is, we take
\begin{equation}
f_\mu (x) \defeq \mu_0 \myExp{ - \mu_0 (x- \lambda) } \indicator{x > \lambda} \eqpunkt
\label{eq:hierarchical}
\end{equation}
Given $n$ random samples $\mu_1,\mu_2, \ldots, \mu_n$  from the above distribution, 
the first order statistic of the sample $Y_n \defeq \min_{i \in \setN{n}} \mu_i $ 
 has a truncated exponential distribution with parameter $n \mu_0$, truncated at $\lambda$. 
 The MGF of $Y_n$ is given as 
\begin{align*}
	\Eof{e^ {a Y_n}}
	={}& \frac{ n \mu_0    }{n \mu_0    -a } \myExp{ a \lambda}   \eqpunkt
\end{align*}
Taking the same approach as in Sect.~\ref{sec:homogen_partial},
we can compute the waiting and response time bounds from Thm.~\ref{theorem:scaling_in_heterogeneous} for different choices of distributions of~$S$. 
In particular for the linear scaling case, \ie,  $\varphi=1$ and when 	$S \sim \Bin{N}{p}$, the upper bounds on the tail probabilities can be explicitly found  as
\begin{align*}
	\probOf{ W \geq \sigma  }  \leq {} &   \frac{  Np \mu_0  }{ (1- q^N)  ( \mu_0  + \sigma ) }  (p e^{-\sigma \lambda } + q  ) ^{N-1} \eqpunkt
\end{align*}
\proofLocation{The proof is provided in Sect.~\ref{sec:Appendix}.}{The proof is provided in \cite{KhudaBukhsh2016techreport}.}

\noindent \textbf{Heterogeneous FJ systems - Three forces:}
As shown above the hierarchical model extends our findings in the previous sections to a wide setting providing insights and lending greater applicability.
Thm.~\ref{theorem:scaling_in_heterogeneous} shows that \emph{(i)} the first order statistic $Y_s \defeq \min_{i \in \setN{s}} \mu_i $ is decisive for the overall performance 
of the system, in addition, to the opposing forces from Sect.~\ref{sec:homogeneous_servers_linear}, i.e., \emph{(ii)} scaling of service times at each server due to the parallelization, and \emph{(iii)} the synchronization penalty at the output. In fact, the heterogeneous case provides less scaling benefit than the homogeneous case due to $Y_s$. This impact can be directly seen from the position of $Y_s$ in the exponent in Thm.~\ref{theorem:scaling_in_heterogeneous}. The optimal strategy given all the relevant parameter values is obtained, as before, by optimizing the upper bound provided in  \Cref{theorem:scaling_in_heterogeneous}.

\begin{figure}
	\centering
			\begin{tikzpicture}[inner sep=0cm, minimum size = 0.65cm,circuit logic US, huge circuit symbols]
			\node (server1)[server, fill= tud1a ] (1) at (0,  0)  {$f_\mu$} ;
			\node (server2)[server,  below left=1.5cm of 1] (2)  { $\mu_1$ } ;
			\node (server3)[server, right=0.75cm of 2] (3)  {$\mu_2$ } ;
			\node (dotbox)[dotbox, right= 0.3cm of 3](5){$\cdots$} ;
			\node (server4)[server, right=1.5cm of 3] (4)  { $\mu_s$ } ;
			
			\draw [-latex, line width=0.5pt, in=90, out=270] (1) to node [minimum size=0.3cm, xshift=0.1cm, yshift=0.25cm] { } (2);
			\draw [-latex, line width=0.5pt, in=90,out=270] (1) to node [minimum size=0.3cm, xshift=0.1cm, yshift=0.25cm] {} (3);
			\draw [-latex, line width=0.5pt, in=90,out=270] (1) to node [minimum size=0.3cm, xshift=0.1cm, yshift=0.25cm] {     $ \{ S=s  \}$ } (4);
			\end{tikzpicture}
		\caption{The hierarchical model for the heterogeneous FJ systems. Conditional on $\{S=s\}$, the average service rates are drawn from a hierarchical distribution~$f_\mu$. }
		\label{fig:HierarchicalModel}				
\end{figure}
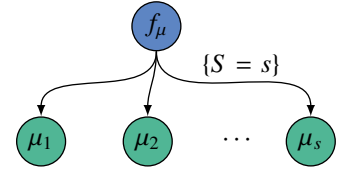

\section{Evaluation of Application Specific Scheduling in Fork-Join Systems}
\label{sec:eval}

\begin{figure}
	\centering
	\begin{subfigure}[b]{0.23\textwidth}
		\includegraphics[width=\textwidth]{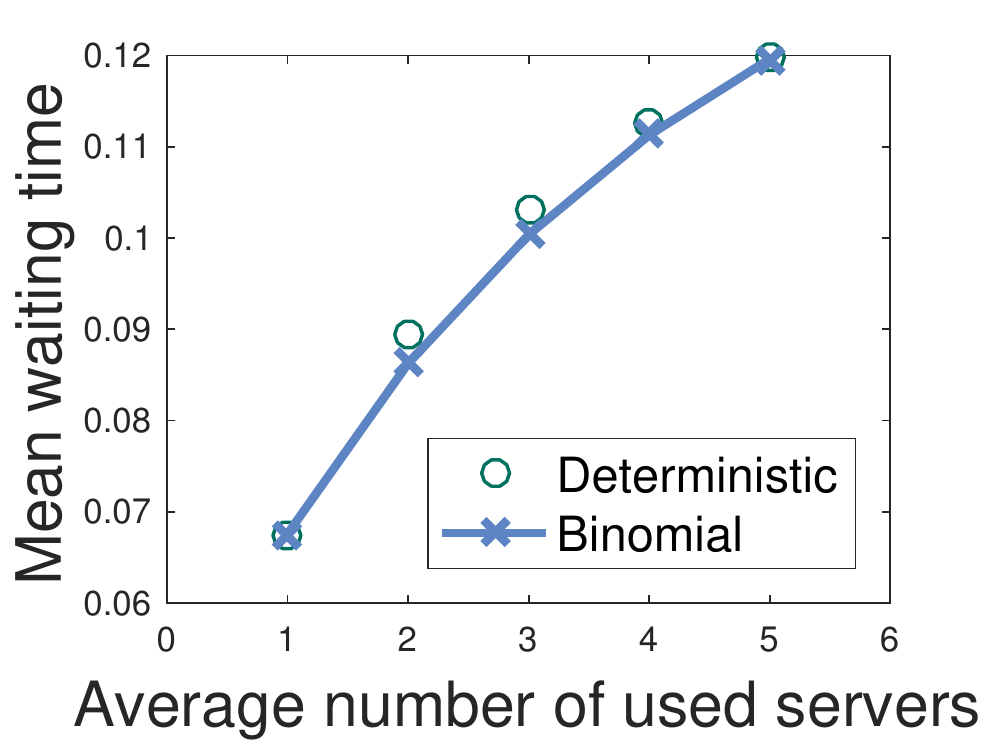}
		\caption{Det. vs. Stochastic Strategy}
		\label{fig:DetVsStoch}
	\end{subfigure}%
	~ 
	\begin{subfigure}[b]{0.23\textwidth}
		\includegraphics[width=\textwidth]{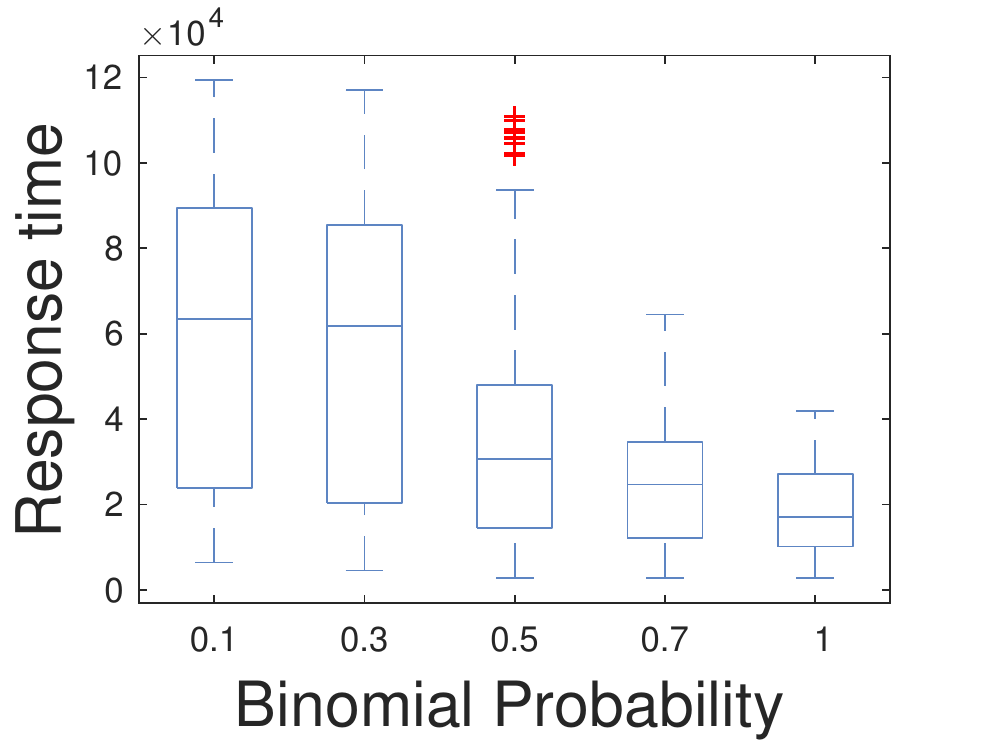}
		\caption{MPTCP as FJ Application}
		\label{fig:mptcp_eval}
	\end{subfigure}
	\caption{(Left) Deterministic vs. stochastic scheduling strategy for an application with specific $\varphi$ in a heterogeneous FJ system. Thm.~\ref{theorem:scaling_in_heterogeneous} shows that, in general, either strategy can be superior. (Right) The response time decreases with an increasing binomial probability, i.e., with increasing average number of Multipath TCP subflows.}
	\label{fig:LinScaling}
\end{figure}

In this section, we provide evaluations for two exemplary Fork-Join scenarios, namely,
\emph{(i)} a comparison of deterministic and stochastic scheduling strategies,
and \emph{(ii)} stochastic scheduling results for the transport protocol Multipath TCP. We consider partial as well as linear scaling benefit as given in \eqref{eq:power_scaling_assumption} and \eqref{eq:scaling_assumption}. 

\noindent \textbf{Evaluation of deterministic and stochastic strategies:}
In the following, we compare the average waiting times in a \emph{heterogeneous} FJ system that uses a binomial scheduling strategy with one using a corresponding deterministic strategy. Our aim is to show the benefit of Thm.~\ref{theorem:scaling_in_heterogeneous}. We consider renewal job arrivals with exponentially distributed inter-arrival times with parameter $\lambda=0.1$ at the ingress of an FJ system with $N=5$ servers each of which can be in a \emph{fast} or a \emph{slow} state with probability $0.5$. Hence, the service times are exponentially distributed with an average  of $\mu=1$ in the first state,  and $\mu=0.5$ in the second. We assume an application with a weak parallelization benefit $\varphi=0.2$. The rationale here is to let the system switch between a regime where the synchronization cost outweighs the scaling benefit, and another regime where the opposite holds true. Given a pool of $N$ available servers, Fig.~\ref{fig:DetVsStoch} compares the mean waiting time under a deterministic strategy that uses $1\leq S'\leq N$ servers to a stochastic strategy that uses an average number of servers $\Eof{S}=S'$. While this example shows that the stochastic strategy \emph{can} be superior to a comparable deterministic one, we know that in general the superiority of either strategy depends on the number of available servers $N$, the application specific parallelization benefit $\varphi$ and the utilization of the FJ system. This strengthens our arguments that for a known application that runs on a given FJ system Thm.~\ref{theorem:scaling_in_heterogeneous} provides the optimal scheduling strategy.

\begin{figure}
    \centering
    \includegraphics[width=\linewidth]{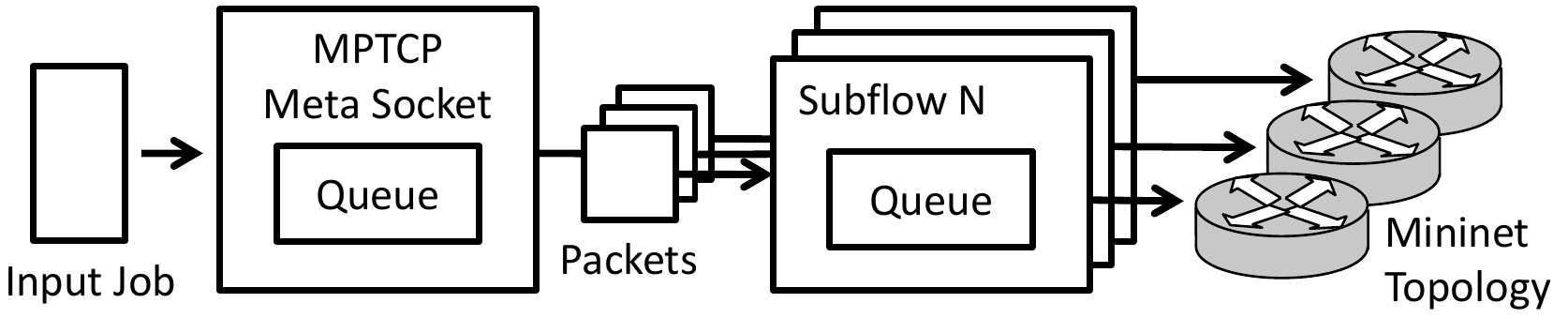}
    \caption{Network transfer evaluation setup: Multipath TCP splits jobs on multiple subflows.}
		\label{fig:mptcp_setup}
		\vspace{-3pt}
\end{figure}

\noindent \textbf{Number of Subflows with Multipath TCP:}
%
We evaluate the binomial strategy for a network data transfer scenario with linear scaling,
in which arriving jobs (\emph{datasets}) of varying sizes are transmitted.
Today's networks often provide several disjoint paths, e.g., for ECMP-based load balancing in data-center networks.
The scheduling strategy chooses the number of utilized network paths.
For a concrete evaluation,
we use the Multipath TCP (MPTCP) transport protocol as Fork-Join system~\cite{rfc6824_mptcp}.
Multipath TCP splits the data on multiple \emph{subflows} and joins them at the receiver side to ensure in-order data transfer for one logical TCP connection (see Fig.~\ref{fig:mptcp_setup}).


For the measurements, we use the MPTCP Linux kernel implementation~\cite{raiciu2012hard} and \emph{Mininet} to emulate topologies with disjoints paths. 
Fig.~\ref{fig:mptcp_eval} shows the response time given a binomial strategy, where the response time decreases with increasing binomial probability. Clearly, in this case the higher number of subflows overwhelms the synchronization penalty. Remarkably, we observe diminishing returns in terms of response time with increasing $p$, which directly translates to the average number of subflows.




%

\section{Related Work}
\label{sec:related-work}

In this section, we review related work on the 
Fork-Join queueing systems and their applications.
First inequalities for the stationary waiting time distribution in  \emph{GI/G/k}  queues are shown in \cite{kingman1970inequalities}. Martingale techniques have been used  in queueing theory, in particular, for providing exponential upper bounds by means of maximal inequalities in \cite{buffet1994exponential, Duffield1994} and later on in
\cite{poloczek2014scheduling}. The authors of \cite{poloczek2014scheduling} propose
a  characterization of queueing systems by bounding suitable martingale constructions, which allows embedding this queueing system characterization into the realm of stochastic network calculus.

An exact analysis of Fork-Join systems with more than two servers in a general setup remains elusive~\cite{baccelli1989fork,boxma1994queueing}, for it is  hard to find closed-form expressions for the steady-state distributions. 
Several works derive exact analytical results for special cases. The authors of \cite{Kim1989} obtain  transient and  steady-state solutions of the FJ queue in terms of  virtual waiting times.
%
The special case of an FJ system with two servers having exponential service times under Poissonian job arrivals 
is studied in \cite{flatto1984two}. Further, a multitude of useful approximations
\cite{ko2008sojourn,nelson1988approximate,kemper2012mean,varma1994interpolation}     and bounds  \cite{baccelli1989fork,balsamo1998bound,Rizk2015Sigmetrics,Rizk2016,Kesidis2015} are available in the literature.  In \cite{Xia2007}, the authors  study the scalability of a general FJ system with blocking, i.e., they study how the throughput of a general FJ system with blocking servers behaves as the number of nodes increases to infinity while the processing speed and buffer space of each node stay unchanged. Another interesting study of limiting behavior is done in \cite{Atar2012} where the authors study FJ networks with non-exchangeable tasks under a heavy traffic regime and show asymptotic equivalence between this network and its corresponding assembly network  with exchangeable tasks.
From the perspective of choosing task assignment policies in distributed server systems, the authors of \cite{harchol1999choosing} study various policies and suggest different optimal policies in different situations. Similarly, the work in \cite{Hyytia2012} seeks to quantify the benefits of splitting a task into different queues. 
It must be noted that the underlying premises in these works are quite dissimilar among themselves and from ours. 
We consider the works \cite{baccelli1989fork,Rizk2016} to be the closest to ours. 
While the basic instruments in deriving bounds in \cite{Rizk2016} are suitably constructed martingales, as they are in this work, the authors of \cite{Rizk2016} do not consider the notion of scheduling with respect to application specific scaling and only look at homogeneous servers. Further, \cite{baccelli1989fork} provides computable bounds for the  expected response times in FJ systems under renewal Poissonian arrival and exponential service times. Their methodology  differs from ours 
as they construct a 
tractable system to derive the bounds. We, on the contrary, concentrate on bounding the waiting time distributions and use these bounds to gain insights into application specific parallelization benefit and scheduling strategies therefrom.


Several contributions have been made to analyze the performance of applications that can be modeled by FJ systems such as MapReduce\cite{dean2008mapreduce,zaharia2008improving}. Performance optimization problems that arise for MapReduce systems are surveyed in \cite{polato2014comprehensive,Hashem2016}.
%
%
In \cite{pakize2014comprehensive}, the authors discuss different scheduling strategies regarding Hadoop MapReduce. 
Related work also considers 
the performance and pricing of EC2 instances such as 
on-demand instances (reliable, expensive) or spot instances (volatile, inexpensive). While there has been a number of articles studying spot pricing, mostly taking the provider's viewpoint such as \cite{zhang2014dynamic,jin2015towards}, the authors of \cite{zheng2015bid} take into account the user's standpoint too and explores bidding strategies analytically. These works can feed into our performance model as they essentially relate the obtained computing power, hence the service time distribution, to the monetary cost of computation. For instance, our model can thus be used to analyze a parallelized system where the number of utilized servers is modulated by the price curve of spot instances.


%
%
%
%
%
%
%
\section{Conclusions}

\label{sec:discussion}
In this paper we provide stochastic bounds on queueing performance metrics for heterogeneous Fork-Join systems under arbitrary level of parallelization benefit. Specifically, using a matching martingale construction we derive bounds on the waiting and response time distributions in this system. We model the application specific parallelization benefit in a given FJ system as a scaling parameter that affects the task service times and analytically show the impact of heterogeneity on this benefit.
We highlight a fundamental trade-off between the parallelization benefit and the FJ intrinsic synchronization penalty. Finally, we
propose optimal stochastic scheduling strategies in FJ systems for varying application specific parallelization benefits. We conclude our work with a simulation study that evaluates stochastic scheduling strategies in a Multipath TCP scenario while
optimizing the number of used paths to improve the system response time.
\section*{Acknowledgment}
This work has been funded by the German Research Foundation~(DFG) as part of projects C3 and B4 
within the Collaborative Research Center~(CRC) 1053 -- MAKI.

\bibliographystyle{IEEEtran}
\bibliography{main}



\ifthenelse{\boolean{longVersion}}{\ \newpage \newpage
\section{Appendix}
\label{sec:Appendix}

Before we present our proofs, let us make a remark that will be useful throughout the discourse.
\begin{myRemark}
	If $\{ X_k, \mathcal{A}_k\} $ and $\{ Y_k, \mathcal{A}_k\}$ are submartingales, then so is $\{ \max(X_k,Y_k) , \mathcal{A}_k\}$ (see \emph{7.3.2.(e) Comments} of \cite{ash2014real} for a proof). Treating martingales as submartingales and extending the above mentioned result to accomodate maximum over a  finite collection of submartinagles, we establish that $\{ X(k), \mathcal{F}_k \}_{k \in \setOfNonnegativeIntegers} $ is a submartingale whenever $\{ X_n(k), \mathcal{A}(k)\}_{k \in \setOfNonnegativeIntegers} $ is a submartingale (or a martingale) for each $ n \in \setN{N}$ and $ X(k) \defeq \max_{n \in \setN{N}} X_n(k) , \forall k \in \setOfNonnegativeIntegers$.
	\label[remark]{remark:max_submartingale}
\end{myRemark}


\section*{}

\begin{proof}[Proof of  \Cref{theorem:RenewalsNon-blocking}\\ ] 
	First notice that the stability condition given in 
	 $ \max_{n \in \setN{N}} \Eof{X_{n,1}} < \Eof{T_1} $ guarantees the existence of $\theta_n >0$ such that $ \alpha_n( \theta_n) \beta(\theta_n) =1 $ for all $n \in \setN{N}$ (see \cite{poloczek2014scheduling, boxma1994queueing}). Hence, $\tilde{\theta} > 0$ is well defined.

	Consider the filtration 
	\begin{equation*}
	\mathcal{F}_k \defeq \sigma( \{ X_{n,i} \}_{n \in \setN{N}, i \leq k}, \{ T_i  \}_{i \leq k} ) \eqkomma
	\end{equation*}
	for all $k \in \setOfNonnegativeIntegers $, where $\sigma( \{ X_{n,i} \}_{n \in \setN{N}, i \leq k}, \{ T_i  \}_{i \leq k} ) $ denotes the smallest $\sigma$-field generated 
	by $ \{ X_{n,i} \}_{n \in \setN{N}, i \leq k}, \{ T_i  \}_{i \leq k}  $. 
	
		\paragraph*{Bounding the waiting time}
	For each $n \in \setN{N}$, define the stochastic process
	\begin{equation*}
	Z_n(k) \defeq \myExp{ \theta_n \sum_{i=1}^{k} (X_{n,i} - T_i)}   , \forall k \in \setOfNonnegativeIntegers  \eqpunkt 
	\end{equation*}
	We shall show that $\{ Z_n(k), \mathcal{F}_k \}_{k \in \setOfNonnegativeIntegers} $ is a martingale. See that, 
	
	\begin{align*}
		\Eof{Z_n(k) \mid \mathcal{F}_{k-1} } ={} &  \Eof{  \myExp{ \theta_n \sum_{i=1}^{k} (X_{n,i} - T_i)}  \mid  \mathcal{F}_{k-1} } \\
		={}&  \myExp{ \theta_n \sum_{i=1}^{k-1} (X_{n,i} - T_i)} \\
		& \times \Eof{  \myExp{ \theta_n  (X_{n,k} - T_k)}   \mid  \mathcal{F}_{k-1}  } \\
		={}& Z_n(k-1) \Eof{  \myExp{ \theta_n  (X_{n,k} - T_k)}  } \\
		={}& Z_n(k-1)  \alpha_n( \theta_n) \beta(\theta_n)  \\
		={} & Z_n(k-1) \eqkomma
	\end{align*}
	exploiting our independence assumptions throughout. The above implies that  $\{ Z_n(k), \mathcal{F}_k \}_{k \in \setOfNonnegativeIntegers} $ is a martingale. 

By virtue of the  sub- and supermartingale inequalities due to Doob (see \emph{Problems 3 (c)} in chapter 7 of \cite{ash2014real}), we have 
	
	\begin{eqnarray}
		\probOf{ \sup_{k \in \setOfNonnegativeIntegers} Z_n(k) \geq  \sigma } \leq \frac{ \sup_{k \in \setOfNonnegativeIntegers} \Eof{ Z_n^{+} (k)}  }{\sigma} \eqkomma
		\label{eq:Doob_inequality}
	\end{eqnarray}
	treating our martingale $Z_n(k)$ as a submartingale, for $\sigma \geq 0$ and for each $n \in \setN{N}$. Now, our martingales are so constructed that 
	
	\begin{eqnarray*}
		Z_n^{+} (k) & \defeq & Z_n (k)  \wedge 0 = Z_n(k) \forall k \in \setOfNonnegativeIntegers  \\
		\implies \Eof{ Z_n^{+} (k)} &=& \Eof{ Z_n(k)} =1 \forall k \in \setOfNonnegativeIntegers  \\
		\implies  \sup_{k \in \setOfNonnegativeIntegers} \Eof{ Z_n^{+} (k)}  &=& 1 \eqpunkt
	\end{eqnarray*}
	Therefore, we have 
	\begin{eqnarray}
		\probOf{ \sup_{k \in \setOfNonnegativeIntegers} Z_n(k) \geq  \sigma } \leq  \frac{ 1 }{\sigma} \eqpunkt
		\label[equation]{eq:Doob_bound2}
	\end{eqnarray} 
	Now define $\tilde{\theta} \defeq \min_{n \in \setN{N}} \theta_n $.
	
	Finally we bound the tail probabilities of the waiting time $W$ as follows 
	\begin{eqnarray*}
		\probOf{W \geq \sigma } & =& \probOf{ \sup_{n \in \setN{N} } \{   \sup_{k \in \setOfNonnegativeIntegers}  \{ \sum_{i=1}^{k} (X_{n,i} - T_i) \}  \}   \geq \sigma} \\
		&=& \probOf{  \cup_{  n \in \setN{N}  } \{   \sup_{k \in \setOfNonnegativeIntegers}  \{ \sum_{i=1}^{k} (X_{n,i} - T_i) \}  \}   \geq \sigma  } \\
		& \leq &  \sum_{n \in \setN{N} }  \probOf{   \sup_{k \in \setOfNonnegativeIntegers}  \{ \sum_{i=1}^{k} (X_{n,i} - T_i) \}    \geq   \sigma} \\
		&=& \sum_{n \in \setN{N} }  \probOf{   \sup_{k \in \setOfNonnegativeIntegers} Z_n(k)   \geq   \myExp{  \theta_n \sigma } } \\
		&=& \sum_{n \in \setN{N} } \myExp{ - \theta_n \sigma }   \\
		&=&   \myExp{ - \tilde{\theta}  \sigma  }  \sum_{n \in \setN{N} } \myExp{ -  ( \theta_n  -  \tilde{\theta}   )\sigma }    \eqkomma \\
	\end{eqnarray*}
	by  \emph{Boole's inequality} and~\eqref{eq:Doob_bound2}. For further simplifications, we shall also use the following bound 
	\begin{align*}
		\probOf{W \geq \sigma } & \leq N  \myExp{ - \tilde{\theta}  \sigma  }   \eqpunkt 
	\end{align*}

	\paragraph*{Bounding the response time}
	
	Define the stochastic process, for each $n \in \setN{N}$, 
	\begin{equation*}
	Y_n(k) \defeq  \myExp{ \theta_n ( \sum_{i=0}^{k} X_{n,i} -  \sum_{i=1}^{k}  T_i ) } , \forall k \in \setOfNonnegativeIntegers \eqpunkt 
	\end{equation*}
	
	We shall show that $\{ Y_n(k), \mathcal{F}_k \}_{k \in \setOfNonnegativeIntegers} $ is a martingale. See that, 
	\begin{align*}
		\Eof{Y_n(k) \mid \mathcal{F}_{k-1} } = {}&  \Eof{ \myExp{ \theta_n ( \sum_{i=0}^{k} X_{n,i} -  \sum_{i=1}^{k}  T_i ) } \mid  \mathcal{F}_{k-1} } \\
		={}& \myExp{ \theta_n ( \sum_{i=0}^{k-1} X_{n,i} -  \sum_{i=1}^{k-1}  T_i ) } \\
		& \times  \Eof{  \myExp{ \theta_n  (X_{n,k} - T_k)}   \mid  \mathcal{F}_{k-1}  } \\
		={}& Y_n(k-1) \Eof{  \myExp{ \theta_n  (X_{n,k} - T_k)}  } \\
		={}& Y_n(k-1)  \alpha_n( \theta_n) \beta(\theta_n)  \\
		={}& Y_n(k-1) \eqkomma
	\end{align*}
	exploiting our independence assumptions throughout. The above implies that  $\{ Y_n(k), \mathcal{F}_k \}_{k \in \setOfNonnegativeIntegers} $ is a martingale.

	By virtue of the  sub- and supermartingale inequalities due to Doob (see \emph{Problems 3 (c)} in chapter 7 of \cite{ash2014real}), we have 
	
	\begin{eqnarray}
	\probOf{ \sup_{k \in \setOfNonnegativeIntegers} Y_n(k) \geq  \sigma } \leq \frac{ \sup_{k \in \setOfNonnegativeIntegers} \Eof{ Y_n^{+} (k)}  }{\sigma} \eqkomma
	\end{eqnarray}
	treating our martingale $Y_n(k)$ as a submartingale, for $\sigma \geq 0$ and for each $n \in \setN{N}$. Now, our martingales are so constructed that 
	
	\begin{eqnarray*}
		Y_n^{+} (k) & \defeq & Y_n (k)  \wedge 0 = Y_n(k)  \;  \forall k \in \setOfNonnegativeIntegers  \\
		\implies \Eof{ Y_n^{+} (k)} &=& \Eof{ Y_n(k)} \\
		& = &   \Eof{\myExp{\theta_n X_{n,0}}} \prod_{i=1}^{k} \alpha_n( \theta_n) \beta(\theta_n)  \\
		& =&     \alpha_n( \theta_n)    \; 
	\forall k \in \setOfNonnegativeIntegers  \\
		\implies  \sup_{k \in \setOfNonnegativeIntegers} \Eof{ Y_n^{+} (k)} &=  &   \alpha_n( \theta_n)    \; 
		\forall k \in \setOfNonnegativeIntegers  
    \eqpunkt
	\end{eqnarray*}
	Therefore, we have 
	\begin{eqnarray}
	\probOf{ \sup_{k \in \setOfNonnegativeIntegers} Y_n(k) \geq  \sigma } \leq      \frac{   \alpha_n( \theta_n)  }{\sigma} \eqpunkt
	\label[equation]{eq:Doob_bound3}
	\end{eqnarray} 
	Now define $\tilde{\theta} \defeq \min_{n \in \setN{N}} \theta_n $.
	
	Finally we bound the tail probabilities of the response time $R$ as follows 
	\begin{eqnarray*}
		\probOf{R  \geq \sigma } & =& \probOf{ \sup_{n \in \setN{N} } \{   \sup_{k \in \setOfNonnegativeIntegers}  \{ \sum_{i=0}^{k} X_{n,i} -  \sum_{i=1}^{k}  T_i  \}  \}   \geq \sigma} \\
		&=& \probOf{  \cup_{  n \in \setN{N}  } \{   \sup_{k \in \setOfNonnegativeIntegers}  \{ \sum_{i=0}^{k} X_{n,i} -  \sum_{i=1}^{k}  T_i  \}  \}   \geq \sigma  } \\
		& \leq &  \sum_{n \in \setN{N} }  \probOf{   \sup_{k \in \setOfNonnegativeIntegers}  \{ \sum_{i=0}^{k} X_{n,i} -  \sum_{i=1}^{k}  T_i  \}    \geq   \sigma} \\
		&=& \sum_{n \in \setN{N} }  \probOf{   \sup_{k \in \setOfNonnegativeIntegers} Y_n(k)   \geq   \myExp{  \theta_n \sigma } } \\
		&=& \sum_{n \in \setN{N} }   \alpha_n( \theta_n) \myExp{ - \theta_n \sigma }   \\
		&=&   \myExp{ - \tilde{\theta}  \sigma  }  \sum_{n \in \setN{N} }  \alpha_n( \theta_n)  \myExp{ -  ( \theta_n  -  \tilde{\theta}   )\sigma }    \eqkomma \\
	\end{eqnarray*}
	by  \emph{Boole's inequality} and~\eqref{eq:Doob_bound3}. For further simplifications, we shall also use the following bound 
	\begin{align*}
	\probOf{ R \geq \sigma } & \leq  \myExp{ - \tilde{\theta}  \sigma  }   \sum_{n \in \setN{N} }  \alpha_n( \theta_n)  \eqpunkt 
	\end{align*}

\end{proof}

\begin{proof}[Proof of \Cref{theorem:PRMappingRenewalsNon-blocking} \\ ]

	Consider the filtration 
	\begin{equation*}
	\mathcal{F}_k \defeq \sigma( \{ \tilde{X}_{n,i} \}_{n \in \setN{N}, i \leq k}, \{ T_i  \}_{i \leq k} ) \eqkomma
	\end{equation*}
	for all $k \in \setOfNonnegativeIntegers $, where $\sigma( \{ \tilde{X}_{n,i} \}_{n \in \setN{N}, i \leq k}, \{ T_i  \}_{i \leq k} ) $ denotes the smallest $\sigma$-field generated 
	by $ \{ \tilde{X}_{n,i} \}_{n \in \setN{N}, i \leq k}, \{ T_i  \}_{i \leq k}  $. 
	
		\paragraph*{Bounding the waiting time}
	
For each $n \in \setN{N}$, define the stochastic process
	\begin{equation*}
	Z_n(k) \defeq \myExp{ \theta_n \sum_{i=1}^{k} (\tilde{X}_{n,i} - T_i)}   , \forall k \in \setOfNonnegativeIntegers  \eqpunkt 
	\end{equation*}
	We shall show that $\{ Z_n(k), \mathcal{F}_k \}_{k \in \setOfNonnegativeIntegers} $ is a martingale. See that, 
	
	\begin{align*}
	&	\Eof{Z_n(k) \mid \mathcal{F}_{k-1} } \\
	={} &  \Eof{  \myExp{ \theta_n \sum_{i=1}^{k} ( \tilde{X}_{n,i} - T_i)}  \mid  \mathcal{F}_{k-1} } \\
		={}&  \myExp{ \theta_n \sum_{i=1}^{k-1} (\tilde{X}_{n,i} - T_i)} \Eof{  \myExp{ \theta_n  (\tilde{X}_{n,k} - T_k)}   \mid  \mathcal{F}_{k-1}  } \\
		={}& Z_n(k-1) \Eof{  \myExp{ \theta_n  (\tilde{X}_{n,k} - T_k)}  } \\
		={}& Z_n(k-1)  \alpha_n^{*}( \theta_n) \beta(\theta_n)  \\
		={}& Z_n(k-1) \eqkomma
	\end{align*}
	exploiting our independence assumptions throughout. The above implies that  $\{ Z_n(k), \mathcal{F}_k \}_{k \in \setOfNonnegativeIntegers} $ is a martingale.

	By virtue of the  sub- and supermartingale inequalities due to Doob (see \emph{Problems 3 (c)} in chapter 7 of \cite{ash2014real}), we have 
	
	\begin{eqnarray}
	\probOf{ \sup_{k \in \setOfNonnegativeIntegers} Z_n(k) \geq  \sigma } \leq \frac{ \sup_{k \in \setOfNonnegativeIntegers} \Eof{ Z_n^{+} (k)}  }{\sigma} \eqkomma
	\end{eqnarray}
	treating our martingale $Z_n(k)$ as a submartingale, for $\sigma \geq 0$ and for each $n \in \setN{N}$. Now, our martingales are so constructed that 
	
	\begin{eqnarray*}
		Z_n^{+} (k) & \defeq & Z_n (k)  \wedge 0 = Z_n(k) \forall k \in \setOfNonnegativeIntegers  \\
		\implies \Eof{ Z_n^{+} (k)} &=& \Eof{ Z_n(k)} =1 \forall k \in \setOfNonnegativeIntegers  \\
		\implies  \sup_{k \in \setOfNonnegativeIntegers} \Eof{ Z_n^{+} (k)}  &=& 1 \eqpunkt
	\end{eqnarray*}
	Therefore, we have 
	\begin{eqnarray}
	\probOf{ \sup_{k \in \setOfNonnegativeIntegers} Z_n(k) \geq  \sigma } \leq  \frac{ 1 }{\sigma} \eqpunkt
	\label[equation]{eq:Doob_bound4}
	\end{eqnarray} 
	Now define $\tilde{\theta} \defeq \min_{n \in \setN{N}} \theta_n $.
	
	Finally we bound the tail probabilities of the waiting time $W$ as follows 
	\begin{eqnarray*}
		\probOf{W \geq \sigma } & =& \probOf{ \sup_{n \in \setN{N} } \{   \sup_{k \in \setOfNonnegativeIntegers}  \{ \sum_{i=1}^{k} (X_{n,i} - T_i) \}  \}   \geq \sigma} \\
		&=& \probOf{  \cup_{  n \in \setN{N}  } \{   \sup_{k \in \setOfNonnegativeIntegers}  \{ \sum_{i=1}^{k} (X_{n,i} - T_i) \}  \}   \geq \sigma  } \\
		& \leq &  \sum_{n \in \setN{N} }  \probOf{   \sup_{k \in \setOfNonnegativeIntegers}  \{ \sum_{i=1}^{k} (X_{n,i} - T_i) \}    \geq   \sigma} \\
		&=& \sum_{n \in \setN{N} }  \probOf{   \sup_{k \in \setOfNonnegativeIntegers} Z_n(k)   \geq   \myExp{  \theta_n \sigma } } \\
		&=& \sum_{n \in \setN{N} } \myExp{ - \theta_n \sigma }   \\
		&=&   \myExp{ - \tilde{\theta}  \sigma  }  \sum_{n \in \setN{N} } \myExp{ -  ( \theta_n  -  \tilde{\theta}   )\sigma }    \eqkomma \\
	\end{eqnarray*}
	by  \emph{Boole's inequality} and~\eqref{eq:Doob_bound4}. For further simplifications, we shall also use the following bound 
	\begin{align*}
	\probOf{W \geq \sigma } & \leq N  \myExp{ - \tilde{\theta}  \sigma  }   \eqpunkt 
	\end{align*}

	\paragraph*{Bounding the response time}
	
	Define the stochastic process, for each $n \in \setN{N}$, 
	\begin{equation*}
	Y_n(k) \defeq  \myExp{ \theta_n ( X_{n,0}+  \sum_{i=1}^{k} \tilde{X}_{n,i} -  \sum_{i=1}^{k}  T_i ) } , \forall k \in \setOfNonnegativeIntegers \eqpunkt 
	\end{equation*}
	
	We shall show that $\{ Y_n(k), \mathcal{F}_k \}_{k \in \setOfNonnegativeIntegers} $ is a martingale. See that, 
	\begin{align*}
	&	\Eof{Y_n(k) \mid \mathcal{F}_{k-1} } \\
	={} &  \Eof{ \myExp{ \theta_n (X_{n,0} + \sum_{i=1}^{k} \tilde{X}_{n,i} -  \sum_{i=1}^{k}  T_i ) } \mid  \mathcal{F}_{k-1} } \\
		={} & \myExp{ \theta_n (X_{n,0} + \sum_{i=1}^{k-1} \tilde{X}_{n,i} -  \sum_{i=1}^{k-1}  T_i ) } \\
		& \times
		 \Eof{  \myExp{ \theta_n  (\tilde{X}_{n,k} - T_k)}   \mid  \mathcal{F}_{k-1}  } \\
		={} & Y_n(k-1) \Eof{  \myExp{ \theta_n  (\tilde{X}_{n,k} - T_k)}  } \\
		={} & Y_n(k-1)  \alpha_n^{*}( \theta_n) \beta(\theta_n)  \\
		={} & Y_n(k-1) \eqkomma
	\end{align*}
	exploiting our independence assumptions throughout. The above implies that  $\{ Y_n(k), \mathcal{F}_k \}_{k \in \setOfNonnegativeIntegers} $ is a martingale.

	By virtue of the  sub- and supermartingale inequalities due to Doob (see \emph{Problems 3 (c)} in chapter 7 of \cite{ash2014real}), we have 
	
	\begin{eqnarray}
	\probOf{ \sup_{k \in \setOfNonnegativeIntegers} Y_n(k) \geq  \sigma } \leq \frac{ \sup_{k \in \setOfNonnegativeIntegers} \Eof{ Y_n^{+} (k)}  }{\sigma} \eqkomma
	\end{eqnarray}
	treating our martingale $Y_n(k)$ as a submartingale, for $\sigma \geq 0$ and for each $n \in \setN{N}$. Now, our martingales are so constructed that 
	
	\begin{eqnarray*}
		Y_n^{+} (k) & \defeq & Y_n (k)  \wedge 0 = Y_n(k)  \;  \forall k \in \setOfNonnegativeIntegers  \\
		\implies \Eof{ Y_n^{+} (k)} &=& \Eof{ Y_n(k)} \\
		& = &   \Eof{\myExp{\theta_n X_{n,0}}} \prod_{i=1}^{k} \alpha_n^*( \theta_n) \beta(\theta_n)  \\
		& =&     \alpha_n( \theta_n)    \; 
		\forall k \in \setOfNonnegativeIntegers  \\
		\implies  \sup_{k \in \setOfNonnegativeIntegers} \Eof{ Y_n^{+} (k)} &=  &   \alpha_n( \theta_n)    \; 
		\forall k \in \setOfNonnegativeIntegers  
		\eqpunkt
	\end{eqnarray*}
	Therefore, we have 
	\begin{eqnarray}
	\probOf{ \sup_{k \in \setOfNonnegativeIntegers} Y_n(k) \geq  \sigma } \leq      \frac{   \alpha_n( \theta_n)  }{\sigma} \eqpunkt
	\label[equation]{eq:Doob_bound5}
	\end{eqnarray} 
	Now define $\tilde{\theta} \defeq \min_{n \in \setN{N}} \theta_n $.
	
	Finally we bound the tail probabilities of the response time $R$ as follows 
	\begin{eqnarray*}
		\probOf{R  \geq \sigma } & =& \probOf{ \sup_{n \in \setN{N} } \{   \sup_{k \in \setOfNonnegativeIntegers}  \{   X_{n,0} +  \sum_{i=1}^{k} \tilde{X}_{n,i} -  \sum_{i=1}^{k}  T_i  \}  \}   \geq \sigma} \\
		&=& \probOf{  \cup_{  n \in \setN{N}  } \{   \sup_{k \in \setOfNonnegativeIntegers}  \{ X_{n,0} +  \sum_{i=1}^{k} \tilde{X}_{n,i} -  \sum_{i=1}^{k}  T_i  \}  \}   \geq \sigma  } \\
		& \leq &  \sum_{n \in \setN{N} }  \probOf{   \sup_{k \in \setOfNonnegativeIntegers}  \{ X_{n,0} +  \sum_{i=1}^{k} \tilde{X}_{n,i} -  \sum_{i=1}^{k}  T_i  \}    \geq   \sigma} \\
		&=& \sum_{n \in \setN{N} }  \probOf{   \sup_{k \in \setOfNonnegativeIntegers} Y_n(k)   \geq   \myExp{  \theta_n \sigma } } \\
		&=& \sum_{n \in \setN{N} }   \alpha_n( \theta_n) \myExp{ - \theta_n \sigma }   \\
		&=&   \myExp{ - \tilde{\theta}  \sigma  }  \sum_{n \in \setN{N} }  \alpha_n( \theta_n)  \myExp{ -  ( \theta_n  -  \tilde{\theta}   )\sigma }    \eqkomma \\
	\end{eqnarray*}
	by  \emph{Boole's inequality} and~\eqref{eq:Doob_bound5}. For further simplifications, we shall also use the following bound 
	\begin{align*}
	\probOf{ R \geq \sigma } & \leq  \myExp{ - \tilde{\theta}  \sigma  }   \sum_{n \in \setN{N} }  \alpha_n( \theta_n)  \eqpunkt 
	\end{align*}

\end{proof}

\begin{myLemma}
	\begin{enumerate}
		\item	Suppose $X \sim \Bin{N}{p}$. Then, the following can be obtained, for $a>0$
		\begin{eqnarray*}
			\Eof{X e^{-aX}} & =&    \frac{N p e^ {-a} }{1- q^N}   (p e^{-a} + q  ) ^{N-1}   \\
			\Eof{X^2 e^{-aX}} & =&    \frac{N p e^ {-a} }{1- q^N}  (  Np e^{-a} +q  ) (    p e^{-a} + q    )^{N-2}  \eqpunkt
		\end{eqnarray*}
		
		\item If~$X$ is distributed uniformly over~$\setN{N}$, then for $a>0$ the following holds
		\begin{eqnarray*}
			\Eof{X e^{-aX}} & =&   \frac{e^{-a}}{N (1- e^{-a})}  [ \frac{1-  e^{ - (N+1)a } } {(1- e^{-a}) }\\
			&&  - (N+1) e^{- a N} ]    \\
			\Eof{X^2 e^{-aX}} & =&     \frac{ e ^ {-2 a}}{N (1- e^{-a})  }  [    2 \frac{(1- e^{-(N+1)a} ) }{   (1- e^{-a}) ^2 }    \\
			&& - \frac{  2(N+1)e^{-Na}  - (1-  e^{ - (N+1)a })  }{ (1- e^{-a})  }  \\
			&& - (N+1)  (  N e^{- (N-1)a} +  e^{- a N}  ) ]  \eqpunkt
		\end{eqnarray*}
	\end{enumerate}
	\label[lemma]{lemma:binomial-uniform}
\end{myLemma}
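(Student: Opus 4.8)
The plan is to reduce both identities to differentiating an explicit generating function and then substituting the appropriate value. For the first part, write $t\defeq -a$ and start from the moment generating function of the truncated binomial,
\begin{equation*}
\Eof{e^{tX}} \;=\; \sum_{s=1}^{N} \frac{\binom{N}{s} p^s q^{N-s}}{1-q^N}\, e^{ts} \;=\; \frac{(pe^t+q)^N - q^N}{1-q^N}\eqkomma
\end{equation*}
where the second equality is the binomial theorem after noting that the $s=0$ term of $\sum_{s=0}^N\binom{N}{s}(pe^t)^s q^{N-s}$ equals $q^N$. Since the sum is finite, differentiation in $t$ passes through it, so $\Eof{X e^{tX}} = \frac{d}{dt}\Eof{e^{tX}}$ and $\Eof{X^2 e^{tX}} = \frac{d^2}{dt^2}\Eof{e^{tX}}$. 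I would then compute $\frac{d}{dt}(pe^t+q)^N = Npe^t(pe^t+q)^{N-1}$ and $\frac{d^2}{dt^2}(pe^t+q)^N = Npe^t(pe^t+q)^{N-2}\bigl(Npe^t+q\bigr)$ by the chain rule (the constant $q^N$ drops out under differentiation), and finally set $t=-a$ and divide by $1-q^N$ to land exactly on the two stated expressions.

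For the second part, set $r\defeq e^{-a}\in(0,1)$, so that $\Eof{X^k e^{-aX}} = \frac1N\sum_{s=1}^N s^k r^s$ for $k=1,2$. The workhorse is the finite geometric sum $g(r)\defeq\sum_{s=0}^N r^s = \frac{1-r^{N+1}}{1-r}$. First I would differentiate once: $\sum_{s=1}^N s r^s = r\,g'(r)$, and a short simplification of $g'(r)=\frac{1-(N+1)r^N+Nr^{N+1}}{(1-r)^2}$ rewrites $\frac1N r\,g'(r)$ in the advertised form for $\Eof{X e^{-aX}}$. For the second moment I would use $s^2=s(s-1)+s$, giving $\sum_{s=1}^N s^2 r^s = r^2 g''(r) + r\,g'(r)$; computing $g''(r)$ via the quotient rule, placing everything over the common denominator $(1-r)^3$, factoring out $r^2$, and regrouping yields the stated closed form, with the three pieces of the bracket corresponding to the geometric-type factor $\tfrac{1-r^{N+1}}{1-r}$, the term $2(N+1)r^N$, and the boundary term $(N+1)(Nr^{N-1}+r^N)$.

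Conceptually there is nothing subtle here: the only ingredients are the binomial theorem, the geometric series, and the validity of termwise differentiation for finite sums. The one place that needs care --- the likely main source of friction --- is the bookkeeping for the uniform second moment, where $g''(r)$ produces several $r^{N-1}$-, $r^{N}$- and $r^{N+1}$-order terms that must be collected precisely into the target shape; I would handle that by expanding everything over $(1-r)^3$ and matching coefficients term by term, and I would sanity-check the resulting formulas at $N=1$ and $N=2$ before declaring victory.
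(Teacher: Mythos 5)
Your proposal is correct, and in substance it is the paper's own argument: these are elementary moment computations for two explicit discrete distributions, and for the uniform case the paper does exactly what you describe (differentiate the finite geometric sum once, then use $s^2=s(s-1)+s$ and differentiate twice). The only methodological difference is in the binomial part: the paper never forms the closed-form generating function $\frac{(pe^t+q)^N-q^N}{1-q^N}$; instead it computes the factorial moments $\Eof{Xe^{-aX}}$ and $\Eof{X(X-1)e^{-aX}}$ directly from the sum via the absorption identities $s\binom{N}{s}=N\binom{N-1}{s-1}$ and $s(s-1)\binom{N}{s}=N(N-1)\binom{N-2}{s-2}$, reindexes, and applies the binomial theorem. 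Your route of differentiating the truncated-binomial MGF twice in $t$ is equivalent (your computation of the second derivative, $Npe^t(pe^t+q)^{N-2}(Npe^t+q)$, matches the paper's combination of the two factorial moments) and is arguably slightly cleaner, since both moments fall out of one closed form and the truncation constant $q^N$ visibly drops out under differentiation. Your caution about the bookkeeping in the uniform second moment is well placed --- that regrouping over $(1-r)^3$ is precisely where the paper's displayed bracket with the three pieces comes from --- but there is no gap in the plan.
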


\begin{proof}[Proof of \Cref{lemma:binomial-uniform}: ]
	\begin{enumerate}
		\item 	First note that

		\begin{align*}
		&	\Eof{X e^{- a X}} \\
		={}& \sum_{s \in \setN{N}} s  e^{- a s}  \probOf{ X=s } \\
		={}& \sum_{s \in \setN{N}} s  e^{- a s}   \frac{\binom{N}{s} p^s q^ {N-s}}{1 - q^N}  \\
		={}& \sum_{s \in \setN{N}} s  e^{- a s} \frac{N}{s}  \frac{\binom{N-1}{s-1} p^s q^ {N-s}}{1 - q^N}  \\
		={}&  \frac{N p e^{- a} }{1 - q^N}  \sum_{s \in \setN{N}}  \binom{N-1}{s-1} { (p e^{-a} ) }^{s-1} q^ {(N-1)-(s-1)} \\
		={} & \frac{N p e^{- a} }{1 - q^N} ( p e^{- a} +q  )^{N-1}  \eqpunkt 
		\end{align*}
		
		Now, see that 
		
		\begin{align*}
		\Eof{ X^2 e^{- a X}}  = {}& 	\Eof{ X(X-1) e^{- a X} + Xe^{- a X}} \\
		={} & 	\Eof{ X(X-1) e^{- a X} } +  \Eof{ X e^{- aX}}   \eqpunkt 
		\end{align*}
		Now, 
		\begin{align*}
		&	\Eof{ X(X-1) e^{- a X  }} \\
		={}& \sum_{s \in \setN{N}} s(s-1)  e^{-a s}  \probOf{ X=s } \\
		={}& \sum_{s \in \setN{N}} s(s-1)  e^{- a s}   \frac{\binom{N}{s} p^s q^ {N-s}}{1 - q^N}  \\
		={}& \sum_{s \in \setN{N}  \setminus \{1\} } s(s-1)  e^{-a s} \frac{N(N-1)}{s(s-1)}  \frac{\binom{N-2}{s-2} p^s q^ {N-s}}{1 - q^N}  \\
		={}&  \frac{N(N-1)    (p e^{- a} )^2 }{1 - q^N}  \sum_{s \in \setN{N}   \setminus \{1\} }  \binom{N-2}{s-2} { (p e^{- a } ) }^{s-2} q^ {(N-2)-(s-2)} \\
		={}& \frac{N(N-1)    (p e^{- a} )^2 }{1 - q^N}   ( p e^{- a} +q  )^{N-2}  \eqpunkt 
		\end{align*}
		Therefore, we get 
		\begin{align*}
		& 	\Eof{ X^2 e^{- a X }} \\
		={}& 	\Eof{ X(X-1) e^{- a X } } +  \Eof{ Xe^{- a X}}  \\
		={}&  \frac{N(N-1)    (p e^{-a} )^2 }{1 - q^N}   ( p e^{- a} +q  )^{N-2}  + \frac{N p e^{- a} }{1 - q^N} ( p e^{- a} +q  )^{N-1}  \\
		={}&  \frac{N p e^{- a} }{1 - q^N} ( p e^{- a} +q  )^{N-2}  (  (N-1)    p e^{- a} +  p e^{- a} +q   ) \\
		={}&  \frac{N p e^{- a} }{1 - q^N}  (  N  p e^{- a} + q  )   ( p e^{- a} +q  )^{N-2}  \eqpunkt 
		\end{align*}

		\item See that

		\begin{align*}
		&	\Eof{X e^{- a X}} \\
		={}& \sum_{s \in \setN{N}} s  e^{- a s}  \probOf{ X=s } \\
		={}& \sum_{s \in \setN{N}} s  e^{- a s}   \frac{1}{N}  \\
		={}& \frac{ e ^ {-a}}{N }\sum_{s \in \setN{N}} s  (e^{- a}  )^{s-1} \\
		={}&  \frac{ e ^ {-a}}{N (1- e^{-a}) } [   \frac{1- e^{ - (N+1)a } } {(1- e^{-a}) }
		- (N+1) e^{- a N} ]   
		\end{align*}
		
		Again, 
		
		\begin{align*}
		\Eof{ X^2 e^{- a X}}  = {}& 	\Eof{ X(X-1) e^{- a X} + Xe^{- a X}} \\
		={} & 	\Eof{ X(X-1) e^{- a X} } +  \Eof{ X e^{- aX}}   \eqpunkt 
		\end{align*}
		Now, 
		\begin{align*}
		&	\Eof{ X(X-1) e^{- a X  }} \\
		={}& \sum_{s \in \setN{N}} s(s-1)  e^{-a s}  \probOf{ X=s } \\
		={}& \sum_{s \in \setN{N}} s(s-1)  e^{- a s}   \frac{1}{N}  \\
		={}& \frac{ e ^ {-2 a}}{N }\sum_{s \in \setN{N}} s (s-1) (e^{- a}  )^{s-2} \\
		={}& \frac{ e ^ {-2 a}}{N (1- e^{-a})  } [ 2 \frac{(1- e^{-(N+1)a} ) }{   (1- e^{-a}) ^2 } -2 \frac{(N+1)e^{-Na} ) }{   (1- e^{-a}) } \\
		& - (N+1)N e^{- (N-1)a} ]
		\end{align*}
		Therefore, we get 
		\begin{align*}
		&		\Eof{ X^2 e^{- a X }} \\
		={}& 	\Eof{ X(X-1) e^{- a X } } +  \Eof{ Xe^{- a X}}  \\
		={}&  \frac{ e ^ {-2 a}}{N (1- e^{-a})  } [ 2 \frac{(1- e^{-(N+1)a} ) }{   (1- e^{-a}) ^2 } -2 \frac{(N+1)e^{-Na} ) }{   (1- e^{-a}) } \\
		& - (N+1)N e^{- (N-1)a} ] +  \frac{ e ^ {-a}}{N (1- e^{-a}) } [   \frac{1- e^{ - (N+1)a } } {(1- e^{-a}) } \\
		&	- (N+1) e^{- a N} ]  	\\
		={}&  \frac{ e ^ {-2 a}}{N (1- e^{-a})  }  [    2 \frac{(1- e^{-(N+1)a} ) }{   (1- e^{-a}) ^2 } -2 \frac{(N+1)e^{-Na} ) }{   (1- e^{-a}) } \\
		& - (N+1)N e^{- (N-1)a}  +    \frac{1-  e^{ - (N+1)a } } {(1- e^{-a}) } \\
		&	- (N+1) e^{- a N} ]  	\\
		={}&   \frac{ e ^ {-2 a}}{N (1- e^{-a})  }  [    2 \frac{(1- e^{-(N+1)a} ) }{   (1- e^{-a}) ^2 }    \\
		& - \frac{  2(N+1)e^{-Na}  - (1-  e^{ - (N+1)a })  }{ (1- e^{-a})  }  \\
		& - (N+1)  (  N e^{- (N-1)a} +  e^{- a N}  ) ]
		\end{align*}

	\end{enumerate}
	
\end{proof}

\begin{proof}[Proof of \Cref{theorem:RenewalsNon-blockingRPM_Scaled}\\ ]
	First note that $\alpha(u) \defeq \Eof{e^{u X}}= \frac{\mu}{\mu -u}$  and $\beta(u)   \defeq  \Eof{e^{-u T_1}} = \frac{\lambda}{\lambda + u}$, whence we find $\theta = \mu -\lambda >0$ such that $\alpha(\theta) \beta(\theta) =1$. Since $g_s(u)= \alpha( \frac{u}{s})$, the solution to $g_s(u) \beta(u)=1$ is given by $\theta_s \defeq s \mu -\lambda >0$. 
	
	Now consider the scenario conditional on $ \{ S=s \}$ for some $s \in \setN{N}$. Proceeding in a similar fashion as in the proof of \cref{theorem:RenewalsNon-blocking} and replacing the probabilities and expectations with the corresponding conditional probabilities and expectations respectively, whenever necessary, we get the following bounds on the conditional tail probabilities of the steady state waiting time and the response time as follows 
	\begin{align*}
		\probOf{ W \geq \sigma  \mid \{ S=s \} }  \leq {} &  s e^{- \theta_s \sigma }   \eqkomma \\
		\probOf{ R \geq \sigma \mid \{ S=s \}  }  \leq {} &  s g_s(\theta_s)  e^{-\theta_s \sigma }    \eqpunkt
	\end{align*}
	Inserting the value of $\theta_s$, 
	\begin{align*}	
		\probOf{ W \geq \sigma  \mid \{ S=s \} }  \leq {} &  s e^{\lambda \sigma } e^{- \mu \sigma s}   \eqkomma \\
		\probOf{ R \geq \sigma \mid \{ S=s \} }  \leq{} &  \frac{ e^{\lambda \sigma } }{\rho} s^2 e^{- \mu \sigma s}  \eqpunkt 
	\end{align*}
	Now, to get bounds on the unconditional probabilities, we utilise the above two upper bounds and note that 
	\begin{align*}
		\probOf{ W \geq \sigma   }  = {} &   \sum_{s \in \setN{N}} \probOf{ W \geq \sigma  \mid \{ S=s \} } \probOf{ S=s } \\ 
		 \leq{} & e^{\lambda \sigma } \sum_{s \in \setN{N}} s  e^{- \mu \sigma s}  \probOf{ S=s } \\ 
		 ={}& e^{\lambda \sigma }   \Eof{S e^{- \mu \sigma S}} \eqpunkt 
	\end{align*}
	Proceeding similarly, we obtain 
	\begin{align*}
		\probOf{ R \geq \sigma } \leq {} &  \frac{e^{\lambda \sigma }  }{\rho} \Eof{ S^2 e^{- \mu \sigma S}}  \eqpunkt 
	\end{align*}
	This completes proof of the  \Cref{theorem:RenewalsNon-blockingRPM_Scaled}.

	Now let us assume 	$S \sim \Bin{N}{p}$. Then, by \Cref{lemma:binomial-uniform}, we have 
	\begin{align*}
		\Eof{S e^{- \mu \sigma S}}  ={}& \frac{N p e^{- \mu \sigma} }{1 - q^N} ( p e^{- \mu \sigma} +q  )^{N-1}  \eqkomma \\
		\Eof{ S^2 e^{- \mu \sigma S}}  = {} & \frac{N p e^{- \mu \sigma} }{1 - q^N}  (  N  p e^{- \mu \sigma} + q  )   ( p e^{- \mu \sigma} +q  )^{N-2}  \eqpunkt 
	\end{align*}
	Therefore, by plugging in $\theta= \mu -\lambda$, we get 
	\begin{align*}
		\probOf{ W \geq \sigma   }  \leq {} & e^{\lambda \sigma }  \frac{N p e^{- \mu \sigma} }{1 - q^N} ( p e^{- \mu \sigma} +q  )^{N-1}  \\
		 ={} & N e^{- \theta \sigma} [ \frac{ p }{1 - q^N} ( p e^{- \mu \sigma} +q  )^{N-1}  ]  \eqkomma
	\end{align*}
	and 
	\begin{align*}
		\probOf{ R \geq \sigma }  \leq{} &  \frac{e^{\lambda \sigma }  }{\rho}    \frac{N p e^{- \mu \sigma} }{1 - q^N}  (  N  p e^{- \mu \sigma} + q  )   ( p e^{- \mu \sigma} +q  )^{N-2}  \\
		={} & \frac{N e^{-\theta \sigma } }{\rho } [ \frac{p  }{1 - q^N}  (  N  p e^{- \mu \sigma} + q  )   ( p e^{- \mu \sigma} +q  )^{N-2} ]  \eqpunkt 
	\end{align*}
	This completes the proof. 

\end{proof}

\begin{myLemma}
	If $X \sim \PowerDistribution{\kappa, \zeta }$ and $a>0$, then
	\begin{align*}
\Eof{X e^{-a X}} ={}& \frac{ \kappa e^{-a} \zeta'(\kappa  e^{-a} )  }{\zeta(\kappa)}  \\
\Eof{X^2 e^{-a X}}  ={}& \frac{ \kappa e^{-a}   }{\zeta(\kappa)} [ \kappa e^{-a}  \zeta''(\kappa  e^{-a} ) +   \zeta'(\kappa  e^{-a} ) ] \eqkomma
	\end{align*}
	where $\zeta'$ and $\zeta''$ are first and second derivatives of $\zeta$, respectively.
	\label[lemma]{lemma:power}
\end{myLemma}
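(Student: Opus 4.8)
The plan is to write out both expectations directly from the definition of the power‑series distribution and then recognise the resulting series as evaluations of $\zeta$ and its derivatives. Recall that for $X \sim \PowerDistribution{\kappa,\zeta}$ we have $\probOf{X=s} = a_s \kappa^s / \zeta(\kappa)$ for $s \in \setOfNaturals$, where $\zeta(x) = \sum_{k \in \setOfNaturals} a_k x^k$ with $a_k \geq 0$. Since $a>0$, the point $x_0 \defeq \kappa e^{-a}$ satisfies $0 < x_0 < \kappa$, so it lies strictly inside the disc of convergence of the power series $\zeta$ (whose radius of convergence is at least $\kappa$, because $\zeta(\kappa) < \infty$ with nonnegative coefficients). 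This is the only analytic input needed, and it is exactly what licenses the term‑by‑term differentiations below; I expect this convergence bookkeeping to be the sole delicate point, and here it is immediate.

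For the first moment, substituting the pmf gives
\[
\Eof{X e^{-aX}} = \frac{1}{\zeta(\kappa)} \sum_{s \in \setOfNaturals} s\, a_s (\kappa e^{-a})^s .
\]
Differentiating $\zeta(x) = \sum_k a_k x^k$ term by term inside its radius of convergence yields $x\zeta'(x) = \sum_k k a_k x^k$; evaluating at $x = \kappa e^{-a}$ gives $\sum_s s a_s (\kappa e^{-a})^s = \kappa e^{-a}\,\zeta'(\kappa e^{-a})$, and dividing by $\zeta(\kappa)$ produces the claimed expression for $\Eof{X e^{-aX}}$.

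For the second moment I would use the identity $s^2 = s(s-1) + s$, so that
\[
\Eof{X^2 e^{-aX}} = \frac{1}{\zeta(\kappa)} \Bigl[ \sum_{s} s(s-1) a_s (\kappa e^{-a})^s + \sum_s s a_s (\kappa e^{-a})^s \Bigr] .
\]
Differentiating $\zeta$ twice term by term gives $x^2 \zeta''(x) = \sum_k k(k-1) a_k x^k$, so the first bracketed sum equals $(\kappa e^{-a})^2 \zeta''(\kappa e^{-a})$, while the second equals $\kappa e^{-a}\zeta'(\kappa e^{-a})$ as before. Factoring out $\kappa e^{-a}$ and dividing by $\zeta(\kappa)$ yields exactly $\tfrac{\kappa e^{-a}}{\zeta(\kappa)}[\kappa e^{-a}\zeta''(\kappa e^{-a}) + \zeta'(\kappa e^{-a})]$, which is the asserted formula.

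As a consistency check, specialising $\zeta$ to the (truncated) generating polynomial of the strategy recovers the formulas of \Cref{lemma:binomial-uniform}: for instance, taking $a_s = \binom{N}{s}$ on $\setN{N}$ gives $\zeta(x) = (1+x)^N - 1$, and with $\kappa = p/(1-p)$ one obtains precisely the truncated binomial pmf and, after this computation, the binomial moment expressions. So beyond the convergence remark, the proof is a short power‑series manipulation with no further obstacles.
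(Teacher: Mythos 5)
Your proposal is correct and follows essentially the same route as the paper's proof: substitute the power-series pmf, identify $\sum_s s a_s x^s = x\zeta'(x)$ for the first bound, and use the decomposition $s^2 = s(s-1)+s$ together with $\sum_s s(s-1)a_s x^s = x^2\zeta''(x)$ for the second. The only additions — the explicit remark that $\kappa e^{-a}$ lies strictly inside the radius of convergence (justifying termwise differentiation) and the consistency check against the binomial case — are correct and harmless refinements of the paper's argument.
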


\begin{proof}[Proof of \Cref{lemma:power} \\]
 	See  that 
 	
 	\begin{align*}
 		\Eof{X e^{- a S}} ={}& \sum_{s \in  \setOfNaturals} s  e^{- a s}  \probOf{ X=s } \\
 		={}& \sum_{s \in \setOfNaturals  } s  e^{- a s}   \frac{ a_s \kappa^ s }{\zeta(\kappa)}   \\
 		={}&  \frac{ \kappa e^{-a} }{\zeta(\kappa ) } \sum_{s \in \setOfNaturals  }a_s  s { (e^{-a} \kappa )}^ {(s-1)}   \\
 	={} &  \frac{ \kappa e^{-a} \zeta'(\kappa  e^{-a} )  }{\zeta(\kappa)} \eqpunkt
 	\end{align*}
 	
 	Now, see that 
 	\begin{align*}
 		\Eof{X(X-1) e^{- a X}} ={}& \sum_{s \in  \setOfNaturals} s(s-1)  e^{- a s}  \probOf{ X=s } \\
 		={}& \sum_{s \in \setOfNaturals  } s (s-1)  e^{- a s}   \frac{ a_s \kappa^ s }{\zeta(\kappa)}   \\
 		={}&  \frac{ (\kappa e^{-a})^2 }{\zeta(\kappa ) } \sum_{s \in \setOfNaturals  }a_s  s(s-1) { (e^{-a} \kappa )}^ {(s-2)}  \\
 		={} & \frac{ (\kappa e^{-a})^2 }{\zeta(\kappa ) }     \zeta''(\kappa  e^{-a} )  \eqpunkt
 	\end{align*}
 	Therefore, we get 
 	\begin{align*}
 		\Eof{ X^2 e^{- a X}} 
 		&={}& 	\Eof{ X(X-1) e^{- a X } } +  \Eof{ Xe^{- a X}}  \\
 		&= {}&  \frac{ (\kappa e^{-a})^2 }{\zeta(\kappa ) }     \zeta''(\kappa  e^{-a} )   +  \frac{ \kappa e^{-a} \zeta'(\kappa  e^{-a} )  }{\zeta(\kappa)}   \\
 		&={} &  \frac{ \kappa e^{-a}   }{\zeta(\kappa)} [ \kappa e^{-a}  \zeta''(\kappa  e^{-a} ) +   \zeta'(\kappa  e^{-a} ) ]   \eqpunkt
 	\end{align*}

\end{proof}

\begin{proof}[Proof of \Cref{theorem:PowerScaling}: ]
	
	First note that $\alpha(u) \defeq \Eof{e^{u X}}= \frac{\mu}{\mu -u}$  and $\beta(u)   \defeq  \Eof{e^{-u T_1}} = \frac{\lambda}{\lambda + u}$, whence we find $\theta = \mu -\lambda >0$ such that $\alpha(\theta) \beta(\theta) =1$. Since $g_s(u)= \alpha( \frac{u}{s^\varphi })$, the solution to $g_s(u) \beta(u)=1$ is given by $\theta_s \defeq s^\varphi \mu -\lambda >0$. 
	
	Now consider the scenario conditional on $ \{ S=s \}$ for some $s \in \setN{N}$. Proceeding in a similar fashion as in the proof of \cref{theorem:RenewalsNon-blocking} and replacing the probabilities and expectations with the corresponding conditional probabilities and expectations respectively, whenever necessary, we get the following bounds on the conditional tail probabilities of the steady state waiting time and the response time as follows 
	\begin{align*}
		\probOf{ W \geq \sigma  \mid \{ S=s \} }  \leq {}&  s e^{- \theta_s \sigma }   \eqkomma \\
		\probOf{ R \geq \sigma \mid \{ S=s \}  }  \leq {}&  s g_s(\theta_s)  e^{-\theta_s \sigma }    \eqpunkt
	\end{align*}
	Inserting the value of $\theta_s$, 
	\begin{align*}	
		\probOf{ W \geq \sigma  \mid \{ S=s \} }  \leq {}&  s e^{\lambda \sigma } \myExp{- \mu \sigma s^\varphi  }   \eqkomma \\
		\probOf{ R \geq \sigma \mid \{ S=s \} }  \leq{} &  \frac{ e^{\lambda \sigma } }{\rho} s^2 \myExp{- \mu \sigma s^\varphi}  \eqpunkt 
	\end{align*}
	Now, to get bounds on the unconditional probabilities, we utilize the above two upper bounds and note that 
	\begin{align*}
		\probOf{ W \geq \sigma   }  = {}&   \sum_{s \in \setN{N}} \probOf{ W \geq \sigma  \mid \{ S=s \} } \probOf{ S=s } \\ 
		 \leq {}& e^{\lambda \sigma } \sum_{s \in \setN{N}} s  \myExp{- \mu \sigma s^\varphi }  \probOf{ S=s } \\ 
		={}& e^{\lambda \sigma }   \Eof{S \myExp{- \mu \sigma S^\varphi }} \eqpunkt 
	\end{align*}
	Proceeding similarly, we obtain 
	\begin{align*}
		\probOf{ R \geq \sigma }  \leq {} &  \frac{e^{\lambda \sigma }  }{\rho} \Eof{ S^2 \myExp{- \mu \sigma S^\varphi}}  \eqpunkt 
	\end{align*}
	This completes proof of  \Cref{theorem:PowerScaling}.

\end{proof}

\begin{proof}[Proof of \Cref{theorem:scaling_in_heterogeneous}: ]
	First note that $\alpha_n(u) \defeq \Eof{e^{u X_n}}= \frac{\mu_n}{\mu_n -u}$  and $\beta(u)   \defeq  \Eof{e^{-u T_1}} = \frac{\lambda}{\lambda + u}$, whence we find $\theta_n = \mu_n -\lambda >0$ such that $\alpha_n(\theta_n) \beta(\theta_n) =1$. Let us denote the conditional MGF of the service times $X_{n,i}$  at the $n$-th server by $g_s^n$.  Since $g_s^n(u)= \alpha_n( \frac{u}{s^\varphi })$, the solution to $g_s^n(u) \beta(u)=1$ is given by $\theta_s^n \defeq s^\varphi \mu_n -\lambda >0$. Define 
	
	\begin{equation*}
	\tilde{\theta}_s \defeq \min_{n \in \setN{s}} \theta_s^n \eqpunkt 
	\end{equation*}
	
	Now consider the scenario conditional on $ \{ S=s \}$ for some $s \in \setN{N}$. Proceeding in a similar fashion as in the proof of \cref{theorem:RenewalsNon-blocking} and replacing the probabilities and expectations with the corresponding conditional probabilities and expectations respectively, whenever necessary, we get the following bounds on the conditional tail probabilities of the steady state waiting time and the response time as follows 
	\begin{align*}
	\probOf{ W \geq \sigma  \mid \{ S=s \} }  \leq {} &  s \myExp{- \tilde{\theta}_s \sigma }   \eqkomma \\
	\probOf{ R \geq \sigma \mid \{ S=s \}  }  \leq {} & [ \sum_{n \in \setN{s}} g_s^n( \theta_s^n) ] \myExp{-\tilde{\theta}_s \sigma }    \eqpunkt
	\end{align*}
	Inserting the value of $\tilde{\theta}_s$ and $\theta_s^n$, 
	\begin{align*}	
	\probOf{ W \geq \sigma  \mid \{ S=s \} }  \leq {} &  s e^{\lambda \sigma } \myExp{- \min_{n \in \setN{s}} \mu_n \sigma s ^\varphi }   \eqkomma \\
	\probOf{ R \geq \sigma \mid \{ S=s \} }  \leq {} &  \frac{ e^{\lambda \sigma } }{\lambda} s^\varphi ( \sum_{n \in \setN{s} } \mu_n )  \myExp{- \min_{n \in \setN{s}} \mu_n \sigma s ^\varphi}  \eqpunkt 
	\end{align*}
	Now, to get bounds on the unconditional probabilities, we utilise the above two upper bounds and note that 
	\begin{align*}
	\probOf{ W \geq \sigma   }  = {} &   \sum_{s \in \setN{N}} \probOf{ W \geq \sigma  \mid \{ S=s \} } \probOf{ S=s } \\ 
	\leq {} & e^{\lambda \sigma } \sum_{s \in \setN{N}} s  \myExp{- \min_{n \in \setN{s}} \mu_n \sigma s^\varphi}   \probOf{ S=s } \\ 
	={} & e^{\lambda \sigma }   \Eof{S \myExp{- \min_{n \in \setN{S}} \mu_n \sigma S^\varphi}   } \eqpunkt 
	\end{align*}
	Proceeding similarly, we obtain 
	\begin{align*}
	\probOf{ R \geq \sigma }  \leq{} &  \frac{e^{\lambda \sigma }  }{\lambda} \Eof{ S^\varphi ( \sum_{n \in \setN{S} } \mu_n )  \myExp{- \min_{n \in \setN{S}} \mu_n \sigma S^\varphi}  }  \eqpunkt 
	\end{align*}
	This completes proof of the first part of \Cref{theorem:scaling_in_heterogeneous}.
	
	\paragraph*{A Simple Case}
	Set  $\varphi=1$. Now,  from 	$S \sim \Bin{N}{p}$ we get,
	
	\begin{align*}
	&	\Eof{S \myExp{- \min_{n \in \setN{S}} \mu_n \sigma S}   }  \\
	 ={}&  \Eof{S    \Eof{\myExp{- \min_{n \in \setN{s}} \mu_n \sigma s} \mid S=s}  }   \\
		={}& \Eof{  S    \Eof{      \myExp{- \sigma s Y_s   \mid S=s}  } }   \\
		={}& \Eof{ S      ( \myExp{ - \sigma S  \kappa_1} - (    \myExp{- \sigma S  \kappa_1} - \myExp{ - \sigma S  \kappa_2} ) (1-\pi)^S   ) } \\
		={}& \Eof{S  \myExp{ - \sigma S  \kappa_1}  } - \Eof{ S  \myExp{- \sigma S  \kappa_1}  (1-\pi)^S    } \\
		& + \Eof{ S   \myExp{ - \sigma S  \kappa_2}  (1-\pi)^S  } \\
		={}&  \Eof{S  \myExp{ - \sigma S  \kappa_1}  } - \Eof{ S  \myExp{- (\sigma   \kappa_1  - \ln(1-\pi)   ) S }   } \\
		&+ \Eof{ S   \myExp{ - (\sigma   \kappa_2     - \ln(1-\pi)     ) S}   } \\
		={}&   \frac{N p e^ {- \sigma   \kappa_1  } }{1- q^N}   (p e^{-\sigma   \kappa_1 } + q  ) ^{N-1}    \\
		& -   \frac{N p e^ {-(\sigma   \kappa_1  - \ln(1-\pi)   )  } }{1- q^N}   (p e^{-(\sigma   \kappa_1  - \ln(1-\pi)   )  } + q  ) ^{N-1}     \\
		&+  \frac{N p e^ {-(\sigma   \kappa_2  - \ln(1-\pi)   )  } }{1- q^N}   (p e^{-(\sigma   \kappa_2  - \ln(1-\pi)   )  } + q  ) ^{N-1}     \\
		={}& \frac{N p    }{1- q^N} [   e^ {- \sigma   \kappa_1  }  (p e^{-\sigma   \kappa_1 } + q  ) ^{N-1}   \\
		& -  (1-\pi) e^ {-\sigma   \kappa_1   }  ( p  (1-\pi)  e^{-\sigma   \kappa_1      } + q  ) ^{N-1}    \\
		& + (1-\pi) e^ {-\sigma   \kappa_2   }  ( p  (1-\pi)  e^{-\sigma   \kappa_2      } + q  ) ^{N-1}  ]  \\
		={}& \frac{N p    }{1- q^N} [   e^ {- \sigma   \kappa_1  }  (p e^{-\sigma   \kappa_1 } + q  ) ^{N-1}   \\
		& -  (1-\pi) (e^ {-\sigma   \kappa_1   }  ( p  (1-\pi)  e^{-\sigma   \kappa_1      } + q  ) ^{N-1} \\
		&   -e^ {-\sigma   \kappa_2   }  ( p  (1-\pi)  e^{-\sigma   \kappa_2      } + q  ) ^{N-1} ) ]  \\
		={}&  \frac{N p    }{1- q^N}  b_1(\sigma) [ 1-   (1-\pi)  (   \frac{c_1(\sigma)  - c_2(\sigma)   }{ b_1(\sigma) } ) ]  \eqkomma
	\end{align*}
	where $b_i(\sigma) \defeq \myExp{- \sigma   \kappa_i  } (p \myExp{-\sigma   \kappa_i } + q  ) ^{N-1} $  and $ c_i(\sigma) \defeq \myExp{- \sigma   \kappa_i  } (p  (1-\pi)  \myExp{-\sigma   \kappa_i     } + q  ) ^{N-1} $ for $i=1,2$. Please note that we have used \cref{lemma:binomial-uniform} in the previous derivation. This gives us the bound 
	
	\begin{equation*}
	\probOf{ W \geq \sigma   }  \leq  e^{\lambda \sigma }  \frac{N p    }{1- q^N}  b_1(\sigma) [ 1-   (1-\pi)  (   \frac{c_1(\sigma)  - c_2(\sigma)   }{ b_1(\sigma) } ) ]  \eqpunkt \\
	\end{equation*}

	This completes the proof. 
	
	\paragraph*{Hierarchical Hyper-parameter  Model}
	Set $\varphi=1$. 
	Now,  from 	$S \sim \Bin{N}{p}$ we get,
	
	\begin{align*}
		\Eof{S \myExp{- \min_{n \in \setN{S}} \mu_n \sigma S}   }  ={}&  \Eof{S    \Eof{\myExp{- \min_{n \in \setN{s}} \mu_n \sigma s} \mid S=s}  }   \\
		={}& \Eof{  S    \Eof{      \myExp{- \sigma s Y_s   \mid S=s}  } }   \\
		={}& \Eof{  S  \frac{ S \mu_0    }{S \mu_0  + \sigma S  } \myExp{- \sigma S  \lambda}   }  \\
		={}&    \frac{  \mu_0    }{ \mu_0  + \sigma   }  \Eof{  S  \myExp{ -\sigma \lambda S  }   }  \\
		={}&    ( \frac{  \mu_0    }{ \mu_0  + \sigma   }  ) ( \frac{N p e^ {-\sigma \lambda } }{1- q^N} )  (p e^{-\sigma \lambda } + q  ) ^{N-1}   
	\end{align*}
	Please note that we have used \cref{lemma:binomial-uniform} in the previous derivation. This gives us the bound 
	
	\begin{align*}
		\probOf{ W \geq \sigma   }  \leq{} &  e^{\lambda \sigma }  ( \frac{  \mu_0    }{ \mu_0  + \sigma   }  ) ( \frac{N p e^ {-\sigma \lambda } }{1- q^N} )  (p e^{-\sigma \lambda } + q  ) ^{N-1}  \\
		={} & \frac{  Np \mu_0  }{ (1- q^N)  ( \mu_0  + \sigma ) }  (p e^{-\sigma \lambda } + q  ) ^{N-1} \eqpunkt
	\end{align*}
	
%
%
%
	This completes the proof.

\end{proof}

}{}

\end{document}